\newcommand{\PreserveBackslash}[1]{\let\temp=\\#1\let\\=\temp}
\newcolumntype{C}[1]{>{\PreserveBackslash\centering}p{#1}}
\newcolumntype{R}[1]{>{\PreserveBackslash\raggedleft}p{#1}}
\newcolumntype{L}[1]{>{\PreserveBackslash\raggedright}p{#1}}
\DeclareMathOperator*{\argmax}{arg\,max}
\def\BState{\State\hskip-\ALG@thistlm}
\begin{document}

%
\title{\bt{Dynamic Pricing and Fleet Management for Electric Autonomous Mobility on Demand Systems}}

\author{Berkay Turan \quad Ramtin Pedarsani \quad Mahnoosh Alizadeh}

\date{}

%


\newcommand{\bt}[1]{{\color{black}#1}}
\newcommand{\btt}[1]{{\color{black}#1}}

\maketitle
\interfootnotelinepenalty=10000
\begin{abstract}
The proliferation of ride sharing systems is a major drive in the advancement of autonomous and electric vehicle technologies. This paper considers the joint routing, battery charging, and pricing problem faced by a profit-maximizing transportation service provider that operates a fleet of autonomous electric vehicles.  We first establish the static planning problem by considering time-invariant system parameters and determine the optimal static policy. While the static policy provides stability of customer queues waiting for rides even if consider the system dynamics, we see that it is inefficient to utilize a static policy as it can lead to long wait times for customers and low profits. To accommodate for the stochastic nature of trip demands, renewable energy availability, and electricity prices and to further optimally manage the autonomous fleet given the need to generate integer allocations, a real-time policy is required. The optimal real-time policy that executes actions based on full state information of the system is the solution of a complex dynamic program. However, we argue that it is intractable to exactly solve for the optimal policy using exact dynamic programming methods and therefore apply deep reinforcement learning to develop a near-optimal control policy.  The two case studies we conducted in Manhattan and San Francisco demonstrate the efficacy of our real-time policy in terms of network stability and profits, while keeping the queue lengths up to 200 times less than the static policy.
\\\\
\providecommand{\keywords}[1]{\textbf{\textit{Keywords---}} #1}
\keywords{autonomous mobility-on-demand systems, optimization and optimal control, reinforcement learning}\\
\end{abstract}

%
\newtheorem{proposition}{Proposition}
\newtheorem{corollary}{Corollary}[proposition]
\newtheorem{theorem}{Theorem}
\newtheorem{lemma}{Lemma}
\makeatletter
\def\blfootnote{\xdef\@thefnmark{}\@footnotetext}
\makeatother
\renewcommand{\thefootnote}{\fnsymbol{footnote}}
 \blfootnote{This work is supported by the NSF Grant 1847096. B. Turan, R. Pedarsani, and M. Alizadeh are with the Department of Electrical and Computer Engineering, University of California, Santa Barbara, CA, 93106 USA e-mail:\{bturan,ramtin,alizadeh\}@ucsb.edu.}
 \renewcommand{\thefootnote}{\arabic{footnote}}
\section{Introduction}
The rapid evolution of enabling technologies for autonomous driving coupled with advancements in eco-friendly electric vehicles (EVs) has facilitated state-of-the-art transportation options for urban mobility. Owing to these developments in automation, it is possible for an autonomous-mobility-on-demand (AMoD) fleet of autonomous EVs to serve the society's transportation needs, with multiple companies now heavily investing in AMoD technology \cite{companies}.

The introduction of autonomous vehicles for mobility on demand services
provides an opportunity for better fleet management. Specifically, idle vehicles can be \textit{rebalanced} throughout the network in order to prevent accumulating at certain locations and to serve induced demand at every location. Autonomous vehicles allow rebalancing to be performed centrally by a platform operator who observes the state of all the vehicles and the demand, rather than locally by individual drivers. Furthermore, EVs provide opportunities for cheap and environment-friendly energy resources (e.g., solar energy). However, electricity supplies and prices differ among the network both geographically and temporally. As such, this diversity can be exploited for cheaper energy options when the fleet is operated by a platform operator that is aware of the electricity prices throughout the whole network. Moreover, a dynamic pricing scheme for rides is essential to maximize profits earned by serving the customers. Coupling an optimal fleet management policy with a dynamic pricing scheme allows the revenues to be maximized while reducing the rebalancing cost and the waiting time of the customers by adjusting the induced demand.

We consider a model that captures the opportunities and challenges of an AMoD fleet of EVs, and consists of complex state and action spaces. In particular, the platform operator has to consider the number of customers waiting to be served at each location  (ride request queue lengths), the electricity prices, traffic conditions, and the states of the EVs (locations, battery energy levels) in order to make decisions. These decisions consist of pricing for rides for every origin-destination (OD) pair and routing/charging decision for every vehicle in the network. Upon taking an action, the state of the network undergoes through a stochastic transition due to the randomness in customer behaviour, electricity prices, and travel times.

We first adopt the common approach of network flow modeling to develop an optimal static pricing, routing, and charging policy that we use as a baseline in this paper. However, flow-based solutions generate fractional flows which can not directly be implemented. Moreover, a static policy executes same actions independent of the network state and is oblivious to the stochastic events that occur in the real setting. Hence, it is not optimal to utilize the static policy in a real dynamic environment. Therefore, a real-time policy that generates integer solutions and acknowledges the network state is required, and can be determined by solving the underlying dynamic program. Due to the continuous and high dimensional state-action spaces however, it is infeasible to develop an optimal real-time policy using exact dynamic programming algorithms. As such, we utilize deep reinforcement learning (RL) to develop a near-optimal policy. Specifically, we show that it is possible to learn a policy via \bt{Proximal Policy Optimization (PPO) \cite{schulman2017ppo}} that increases the total profits generated by jointly managing the fleet of EVs (by making routing and charging decisions) and pricing for the rides. We demonstrate the performance of our policy by using the total profits generated and the queue lengths as metrics.

Our contributions can be summarized as follows:
\begin{enumerate}
    \item We formalize a vehicle and network model that captures the aforementioned characteristics of an AMoD fleet of EVs as well as the stochasticity in demand and electricity prices.
    \item We analyze the static problem, where we consider a time-invariant environment (time-invariant arrivals, electricity prices, etc.) to characterize the family of policies that guarantee stability of the dynamic system, to gain insight towards the actual dynamic problem, and to further provide a baseline for comparison.
    \item We employ deep RL methods to learn a joint pricing, routing and charging policy that effectively stabilizes the queues and increases the profits.
\end{enumerate}

\begin{figure}[t]
    \centering
    \includegraphics[width=.9\textwidth]{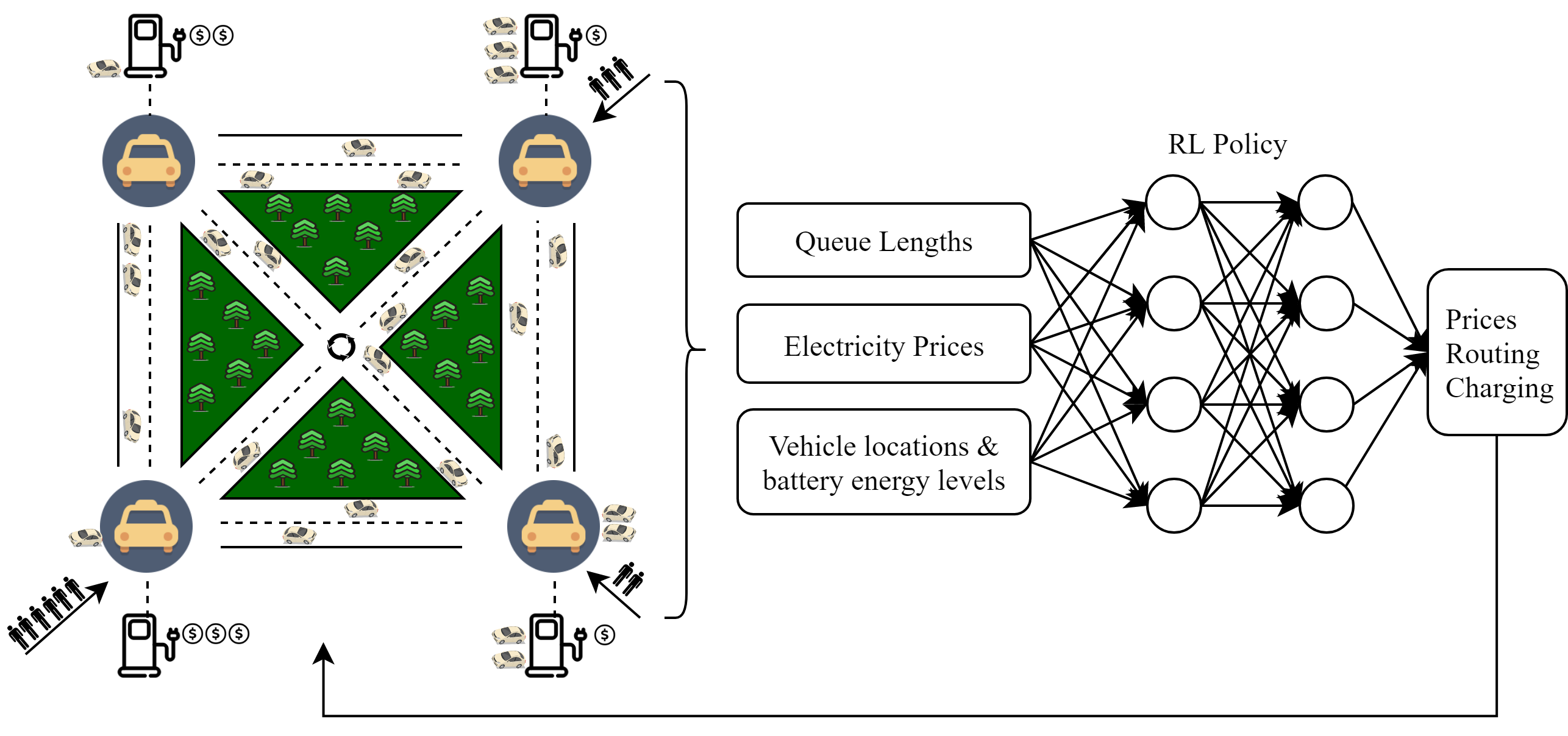}
    \caption{The schematic diagram of our framework. Our deep RL agent processes the state of the vehicles, queues and electricity prices and outputs a control policy for pricing as well as autonomous EVs' routing and charging.}
    \label{fig:schema}
\end{figure}

\begin{wrapfigure}{r}{.3\textwidth}
\vspace{-8em}
    \begin{minipage}{\linewidth}
    \centering\captionsetup[subfigure]{justification=centering}
    \includegraphics[width=\linewidth]{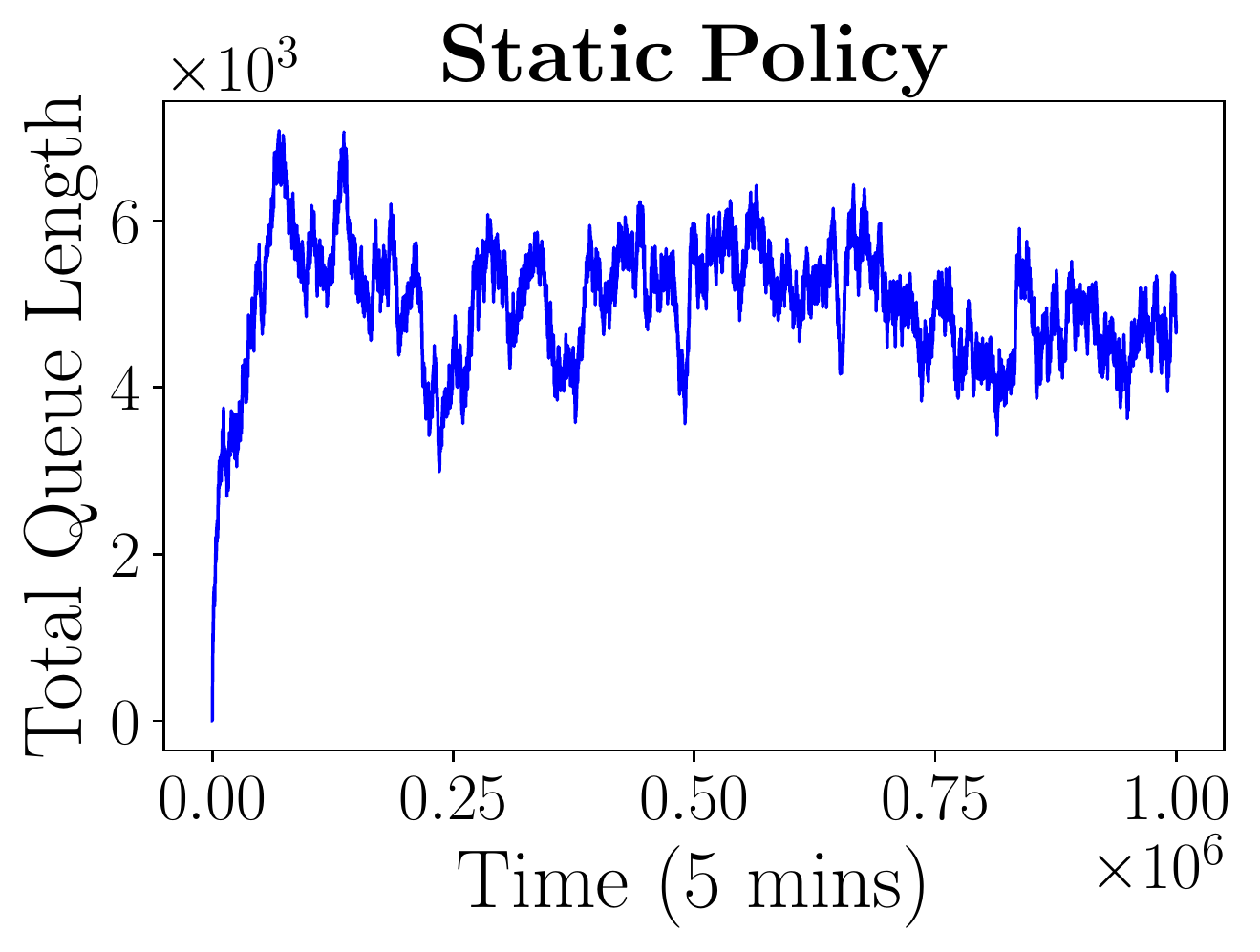}
    \subcaption{}
    \label{fig:prequeuestatic}
    \includegraphics[width=\linewidth]{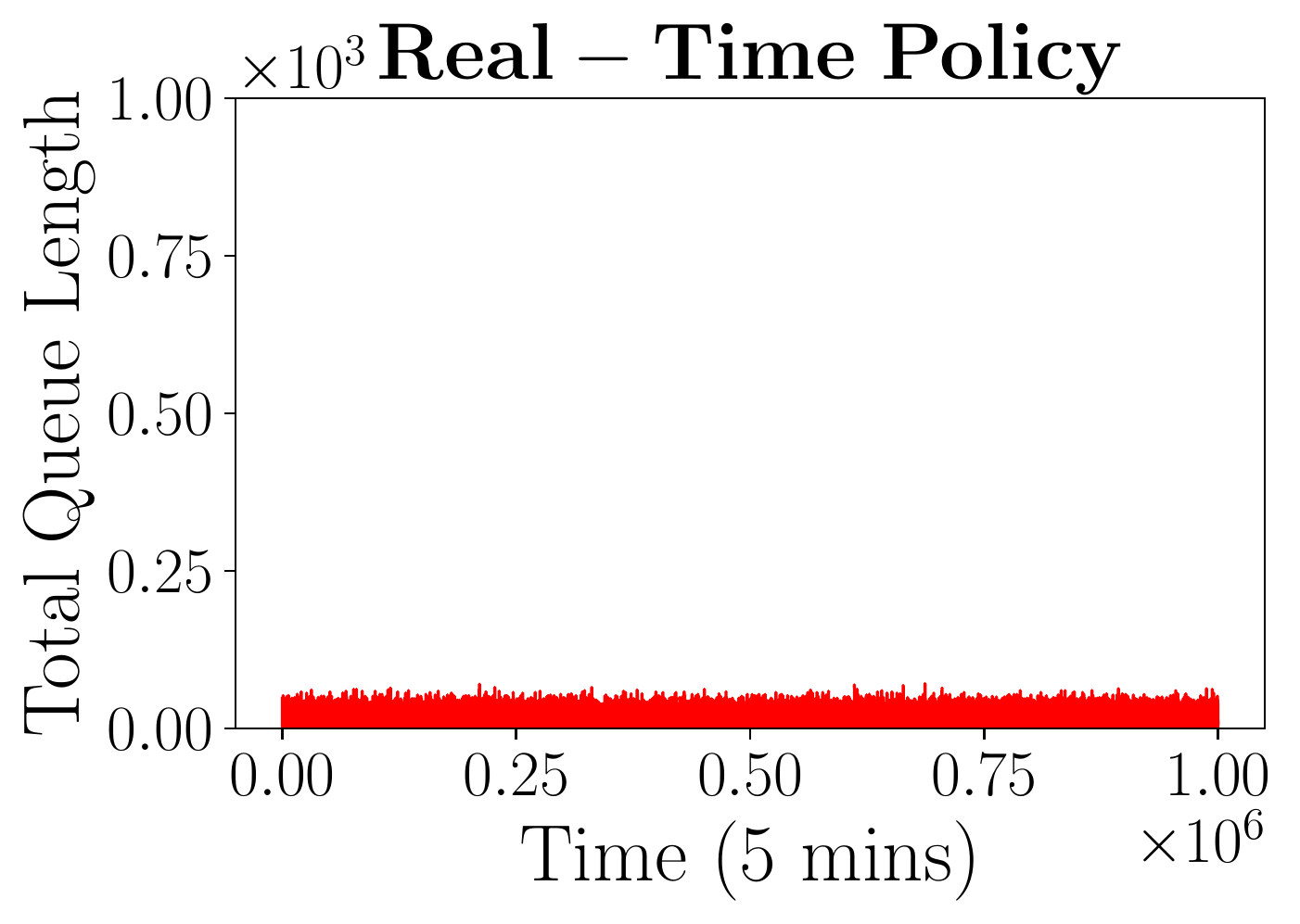}
    \subcaption{}
    \label{fig:prequeuedynamic}
\end{minipage}
\caption{(a) The optimal static policy manages to stabilize the queues over a very long time period but is unable to clear them whereas (b) RL control policy stabilizes the queues and manages to keep them significantly low (note the scales).}\label{fig:preview}
\vspace{-2em}
\end{wrapfigure}

We visualize our real-time framework as a schematic diagram in Figure \ref{fig:schema} and preview our results in Figure \ref{fig:preview}, showing that a real-time pricing and routing policy can successfully keep the queue lengths 400 times lower than the static policy. This policy is also able to decrease the charging costs by $25\%$ by utilizing smart charging strategies (which will be demonstrated in Section~\ref{sec:numerical}).

\vspace{.5em}
\noindent
\textbf{Related work:} Comprehensive research perceiving various aspects of AMoD systems is being conducted in the literature. Studies surrounding fleet management focus on optimal EV charging in order to reduce electricity costs as well as optimal vehicle routing in order to serve the customers and to rebalance the empty vehicles throughout the network so as to reduce the operational costs and the customers' waiting times. Time-invariant control policies adopting queueing theoretical \cite{queuetheoretical}, fluidic \cite{fluidic}, network flow \cite{networkflow}, and Markovian \cite{markovian} models have been developed by using the steady state of the system. The authors of \cite{wei2019ride} consider ride-sharing systems with mixed autonomy. However, the proposed control policies in these papers are not adaptive to the time-varying nature of the future demand. As such, there is work on developing time-varying model predictive control (MPC) algorithms \cite{zhang_rossi_pavone,DBLP:journals/corr/MiaoHLSHZMHP16,datadrivenmpc,mpcmiao,mpcstochastic}. The authors of \cite{datadrivenmpc,mpcmiao} propose data-driven algortihms and the authors of \cite{mpcstochastic} propose a stochastic MPC algorithm focusing on vehicle rebalancing. In \cite{zhang_rossi_pavone}, the authors also consider a fleet of EVs and hence propose an MPC approach that optimizes vehicle routing and scheduling subject to energy constraints. Using a fluid-based optimization framework, the authors of \cite{fluidtimevarying} investigate tradeoffs between fleet size, rebalancing cost, and queueing effects in terms of passenger and vehicle flows under time-varying demand. The authors in \cite{cassandrasarxiv} develop a parametric controller that approximately solves the intractable dynamic program for rebalancing over an infinite-horizon. Similar to AMoD, carsharing systems also require rebalancing in order to operate efficiently. By adopting a Markovian model, the authors of \cite{REPOUX201982} introduce a dynamic proactive rebalancing algorithm for carsharing systems by taking into account an estimate of the future demand using historical data. In \cite{BOYACI2017214}, the authors develop an integrated multi-objective mixed integer linear programming optimization and discrete event simulation framework to optimize vehicle and personnel rebalancing in an electric carsharing system. Using a network-flow based model, the authors of \cite{WARRINGTON2019110} propose a two-stage approximation scheme to establish a real-time rebalancing algorithm for shared mobility systems that accounts for stochasticity in customer demand and journey valuations.

Aside from these, there are  studies on applications of RL methods in transportation such as adaptive routing \cite{MAO2018179}, traffic management \cite{ZHU201430,WALRAVEN2016203}, traffic signal control \cite{ZHU2015487,lisignal2016}, and dynamic routing of autonomous vehicles with the goal of reducing congestion in mixed autonomy traffic networks \cite{lazar2019learning}. Relevant studies to our work aim to develop dynamic policies for rebalancing as well as ride request assignment via decentralized reinforcement learning approaches \cite{rldecentralized1,rldecentralized2,rldecentralized3,linrldecentralized}. In these works however, the policies are developed and applied locally by each autonomous vehicle and this decentralized approach may sacrifice system level optimality. A centralized deep RL approach tackling the rebalancing problem is proposed in \cite{MAO2020102626}, which is closest to the approach we adopt in this paper. Although their study adopts a centralized deep RL approach similar to our paper, they have a different system model and solely focus on the rebalancing problem and do not consider pricing for rides as a control variable for the queues nor the charging problem of EVs as reviewed next.

Regarding charging strategies for large populations of EVs, \cite{New_TSG_SmartEVGrid,New_Survey,New_20} provide in-depth reviews and studies of smart charging technologies. An agent-based model to simulate the operations of an AMoD fleet of EVs under various vehicle and infrastructure scenarios has been examined in \cite{agentbased}. By augmenting optimal battery management of autonomous electric vehicles to the classic dial-a-ride problem (DARP), the authors of \cite{BONGIOVANNI2019436} introduce the electric autonomous DARP that aims to minimize the total travel time of all the vehicles and riders. The authors of \cite{nate} propose an online charge scheduling algorithm for EVs providing AMoD services. By adopting a static network flow model in \cite{berkay}, the benefits of smart charging have been investigated and approximate closed form expressions that highlight the trade-off between operational costs and charging costs have been derived. Furthermore, \cite{rossi_iglesias_alizadeh} studies interactions between AMoD systems and the power grid. In addition, \cite{chen_kockelman} studies the implications of pricing schemes on an AMoD fleet of EVs. In \cite{pricing2}, the authors propose a dynamic joint pricing and routing strategy for non-electric shared mobility on demand services. \cite{jointopt} studies a quadratic programming problem in order to jointly optimize vehicle dispatching, charge scheduling, and charging infrastructure, while the demand is defined exogenously.

To the best of our knowledge, there is no existing work on centralized real-time management for electric AMoD systems addressing the joint optimization scheme of vehicle routing and charging as well as pricing for the rides. In this paper we aim to highlight the benefits of a real-time controller that jointly: (i) routes the vehicles throughout the network in order to serve the demand for rides as well as to relocate the empty vehicles for further use, (ii) executes smart charging strategies by exploiting the diversity in the electricity prices (both geographically and temporally) in order to minimize charging costs, and (iii) adjusts the demand for rides by setting prices in order to stabilize the system (i.e., the queues of customers waiting for rides) while maximizing  profits.

\vspace{.5em}
\noindent
\textbf{Paper Organization:} The remainder of the paper is organized as follows. In Section \ref{systemmodel}, we present the system model and define the platform operator's optimization problem.  In Section \ref{sec:static}, we discuss the static planning problem associated with the system model and characterize the optimal static policy. In Section \ref{sec:dynamic}, we propose a method for developing a near-optimal real-time policy using deep reinforcement learning. In Section \ref{sec:numerical}, we present the numerical results of the case studies we have conducted in Manhattan and San Francisco to demonstrate the performance of our real-time control policy. Finally, we conclude the paper in Section \ref{sec:conclusion}.

\section{System Model and Problem Definition}\label{systemmodel}

\noindent
\textbf{Network and Demand Models:}
We consider a fleet of AMoD EVs operating within a transportation network characterized by a fully connected graph consisting of ${\mathcal M}=\{1,\ldots,m\}$ nodes that can each serve as a trip origin or destination.  
We study a discrete-time system with time periods normalized to integral units $t\in\{0,1,2,\dots\}$. In this discrete-time system, we model the arrival of the potential riders with OD pair $(i,j)$ as a Poisson process with an arrival rate of $\lambda_{ij}(t)$ in period $t$, where $\lambda_{ii}(t)=0$. We adopted a price-responsive rider model studied in \cite{bimpikis}. We assume that the riders are heterogeneous in terms of their willingness to pay. In particular, if the price for receiving a ride from node $i$ to node $j$ in period $t$ is set to $\ell_{ij}(t)$, the induced arrival rate for rides from $i$ to $j$ is given by $\Lambda_{ij}(t)=\lambda_{ij}(t)(1-F(\ell_{ij}(t)))$, where $F(\cdot)$ is the cumulative distribution of riders' willingness to pay with a support of $[0,\ell_{\max}]$\footnote{\bt{For brevity of notation, we uniformly set $\ell_{\max}$ to be the maximum willingness to pay for all OD pairs without loss of generality. Our results can be derived in a similar fashion by replacing $\ell_{\max}$ with $\ell_{\max}^{ij}$, where $\ell_{\max}^{ij}$ is the maximum willingness to pay for OD pair $(i,j)$.}}. Thus, the number of new ride requests in time period $t$ is $A_{ij}(t) \sim \text{Pois}(\Lambda_{ij}(t))$ for OD pair $(i,j)$.

\vspace{.5em}
\noindent
\textbf{Vehicle Model:} 
To capture the effect of trip demand and the associated charging and routing (routing also implies rebalancing of the empty vehicles) decisions on the costs associated with operating the fleet (maintenance, mileage, etc.), we assume that each autonomous vehicle in the fleet has a per period operational cost of $\beta$. Furthermore, as the vehicles are electric, they have to sustain charge in order to operate. Without loss of generality, we assume there is a charging station placed at each node $i \in \mathcal M$. To charge at node $i$ during time period $t$, the operator pays a price of electricity $p_i(t)$ per unit of energy.  
We assume that all EVs in the fleet have a battery capacity denoted as $v_{\max}\in \mathbb Z^+$; therefore, each EV has a discrete battery energy level $v \in \mathcal V$, where $\mathcal V = \{v\in \mathbb{N}| 0\leq v \leq v_{\max}\}$. In our discrete-time model, we assume each vehicle takes one period to charge one unit of energy and $\tau_{ij}$ periods to travel between OD pair $(i,j)$, while consuming $v_{ij}$ units of energy\footnote{\bt{In this paper, we consider the travel times to be constant and exogenously defined for the time period the policy is developed for. This is because we assume that the number of AMoD vehicles is much less compared to the rest of the traffic. Also, to consider changing traffic conditions throughout the day, it is possible to train multiple static and real-time control policies for the different time intervals.}}.

\vspace{.5em}
\noindent
\textbf{Ride \bt{Hailing} Model:}
The platform operator dynamically routes the fleet of EVs in order to serve the demand at each node. Customers that purchase a ride are not immediately matched with a ride, but enter the queue for OD pair $(i,j)$. After the platform operator executes routing decisions for the fleet, the customers in the queue for OD pair $(i,j)$ are matched with rides and served in a first-come, first-served discipline. A measure of the expected wait time is not available to each arriving customer. However, the operator knows that longer wait times will negatively affect their business and hence seeks to minimize the total wait time experienced by users. Denote the queue length for OD pair $(i,j)$ by $q_{ij}(t)$. If after serving the customers, the queue length $q_{ij}(t) > 0$, the platform operator is penalized by a fixed cost of $w$ per person at the queue to account for the value of time of the customers.

\vspace{.5em}
\noindent
\textbf{Platform Operator's Problem:} We consider a profit-maximizing AMoD operator that manages a  fleet of EVs that make trips to provide transportation services to customers. The operator's goal is to maximize profits by 1) setting prices for rides and hence managing customer demand at each node; 2) optimally operating the AMoD fleet (i.e., charging and routing) to minimize operational and charging costs. We will study two types of control policies the platform operator utilizes: 1) a static policy, where the pricing, routing and charging decisions are time invariant and independent of the state of the system; 2) a real-time policy, where the pricing, routing and charging decisions are dependent on the system state.

\section{Analysis of the Static Problem}\label{sec:static}
In this section, we establish and discuss the static planning problem to provide a measure for comparison and demonstrate the efficacy of the real-time policy (which will be discussed in Section~\ref{sec:dynamic}). To do so, we consider the fluid scaling of the dynamic network and characterize the static problem via a network flow formulation. Under this setting, we use the expected values of the variables (arrivals and prices of electricity) and ignore their time dependent dynamics, while allowing the vehicle routing decisions to be flows (real numbers) rather than integers. The static problem is convenient for  determining the optimal static pricing, routing, and charging policy, under which the queueing network of the dynamic system is stable \cite{pedarsani2017robust}\footnote{The stability condition that we are interested in is rate stability of all queues. A queue for OD pair $(i,j)$ is rate stable if $\underset{t\rightarrow \infty}{\lim}q_{ij}(t)/t=0$.}.

\subsection{Static Profit Maximization Problem}

We formulate the static optimization problem via a network flow model that aims to maximize the platform operator's profits.
The platform operator maximizes its profits by setting prices and making routing and charging decisions such that the system remains stable. 

Let $\ell_{ij}$ be the prices for rides for OD pair $(i,j)$, $x_{ij}^v$ be the number of vehicles at node $i$ with energy level $v$ being routed to node $j$, and $x_{ic}^v$ be the number of vehicles charging at node $i$ starting with energy level $v$. We state the platform operator's profit maximization problem as follows:

\begin{subequations}
\label{eq:flowoptimization}
\begin{align}&\underset{x_{ic}^v,x_{ij}^v,\ell_{ij}}{\text{max}}
\label{eq:staticobjective}& &\sum_{i\in{\cal M}}\sum_{j\in{\cal M}}\lambda_{ij}\ell_{ij}(1-F(\ell_{ij}))-\sum_{i\in{\cal M}}\sum_{v=0}^{v_{\max}-1} (\beta+p_i) x_{ic}^v
-\beta \sum_{i\in{\cal M}}\sum_{j\in{\cal M}}\sum_{v=v_{ij}}^{v_{\max}} x_{ij}^v\tau_{ij} \\
& \text{subject to}
\label{eq:staticconst1}& &\lambda_{ij}(1-F(\ell_{ij})) \leq \sum_{v=v_{ij}}^{v_{\max}}x_{ij}^v \quad\forall i,j \in \mathcal M,\\
\label{eq:staticconst2}& & & x_{ic}^v+\sum_{j\in{\cal M}} x_{ij}^v=x_{ic}^{v-1}+\sum_{j\in{\cal M}} x_{ji}^{v+v_{ji}}\quad\forall i\in\mathcal M,\;\forall v\in\mathcal V,\\
\label{eq:staticconst5}& & & x_{ic}^{v_{\max}}=0\quad \forall i\in \mathcal M,\\
\label{eq:staticconst6}& & & x_{ij}^v=0\quad \forall v<v_{ij},\; \forall  i,j\in \mathcal M,\\
\label{eq:staticconst7}& & & x_{ic}^v\geq 0, \; x_{ij}^v\geq 0\; ~\forall i,j\in \mathcal M,\;\forall v\in\mathcal V,\\
\label{eq:staticconst8}& & & x_{ic}^v=x_{ij}^v=0\quad \forall v\notin\mathcal V,\; \forall  i,j\in \mathcal M.
\end{align}
\end{subequations}
The first term in the objective function in \eqref{eq:flowoptimization} accounts for the aggregate revenue the platform generates by providing rides for $\lambda_{ij}(1-F(\ell_{ij}))$ number of riders with a price of $\ell_{ij}$. The second term is the operational and charging costs incurred by the charging vehicles (assuming that $p_i(t)=p_i\;\forall t$ under the static setting), and the last term is the operational costs of the trip-making vehicles (including rebalancing trips).

The constraint \eqref{eq:staticconst1} requires the platform to operate at least as many vehicles to serve all the induced demand between any two nodes $i$ and $j$ (The rest are the vehicles travelling without passengers, i.e., rebalancing vehicles). We will refer to this as the {\it demand satisfaction constraint}. The constraint \eqref{eq:staticconst2} is the  \textit{flow balance constraint} for each node and each battery energy level, which restricts the number of available vehicles at node $i$ and energy level $v$ to be the sum of arrivals from all nodes (including idle vehicles) and vehicles that are charging with energy level $v-1$. 
The constraint \eqref{eq:staticconst5} ensures that the vehicles with full battery do not charge further, and the constraint \eqref{eq:staticconst6} ensures the vehicles sustain enough charge to travel between OD pair $(i,j)$.

The solution to the optimization problem in \eqref{eq:flowoptimization} is the optimal static policy that consists of optimal prices as well as optimal vehicle routing and charging decisions. This policy can not directly be implemented in a real environment because it does not yield integer valued solutions. It is possible generate integer-valued solutions to be implemented in a real environment using the fractional flows (e.g., randomizing the vehicle decisions according to the flows, which we do in Section~\ref{sec:numerical}), yet the methodology is not the focus of our work. Instead, we highlight a sufficient condition for a realizable policy (generating integer valued actions) to provide stability according to the feasible solutions of \eqref{eq:flowoptimization}:


\begin{proposition}\label{prop:stability}

Let $\{\tilde{\ell}_{ij}, \tilde{x}_{ij}^{v}, \tilde{x}_{ic}^{v}\}$ be a feasible solution of \eqref{eq:flowoptimization}. Let $\mu$ be a policy that generates integer actions and can be implemented in the real environment. Then, $\mu$ guarantees stability of the system if for all OD pairs $(i,j)$: 
\begin{enumerate}
    \item The time average of the induced arrivals equals $(1-F(\tilde{\ell}_{ij}))$, and
    \item The time average of the routed vehicles equals $\sum_{v=v_{ij}}^{v_{\max}}\tilde{x}_{ij}^{v}$.
\end{enumerate}
\end{proposition}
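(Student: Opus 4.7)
The plan is to establish rate stability of each OD queue by showing that the long-run arrival rate does not exceed the long-run service rate, then invoking a standard Loynes/Lindley-type argument. First I would write the per-period recursion
\[
q_{ij}(t+1)=\max\bigl\{0,\; q_{ij}(t)+A_{ij}(t)-R_{ij}(t)\bigr\},
\]
where $A_{ij}(t)\sim\mathrm{Pois}\bigl(\lambda_{ij}(1-F(\ell_{ij}(t)))\bigr)$ is the realized number of ride requests and $R_{ij}(t)$ is the integer number of vehicles dispatched from $i$ to $j$ in period $t$ by the realizable policy $\mu$. Unrolling this Lindley-type recursion gives the Loynes-style upper bound
\[
q_{ij}(t)\;\le\;q_{ij}(0)+\max_{0\le s\le t}\sum_{k=s}^{t-1}\bigl(A_{ij}(k)-R_{ij}(k)\bigr),
\]
so it suffices to show that the supremum on the right grows sublinearly in $t$ almost surely.

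Second, I would evaluate the two time-averaged rates that control that supremum. Condition~1, combined with the Strong Law of Large Numbers for a Poisson arrival process whose intensity has a well-defined long-run average, yields
\[
\lim_{t\to\infty}\frac{1}{t}\sum_{s=0}^{t-1}A_{ij}(s)=\lambda_{ij}\bigl(1-F(\tilde\ell_{ij})\bigr)\quad\text{a.s.},
\]
while condition~2 directly gives $\lim_{t\to\infty}\tfrac{1}{t}\sum_{s=0}^{t-1}R_{ij}(s)=\sum_{v=v_{ij}}^{v_{\max}}\tilde x_{ij}^{v}$ almost surely. Since $\{\tilde\ell_{ij},\tilde x_{ij}^v,\tilde x_{ic}^v\}$ is feasible for~\eqref{eq:flowoptimization}, the demand-satisfaction constraint~\eqref{eq:staticconst1} provides $\lambda_{ij}(1-F(\tilde\ell_{ij}))\le\sum_{v=v_{ij}}^{v_{\max}}\tilde x_{ij}^{v}$, so the net drift of the random walk $Z_t:=\sum_{k=0}^{t-1}(A_{ij}(k)-R_{ij}(k))$ is nonpositive. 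Applying a standard nonpositive-drift lemma (partial sums with nonpositive mean drift satisfy $\max_{0\le s\le t}(Z_t-Z_s)=o(t)$ a.s.) to the Loynes bound then yields $q_{ij}(t)/t\to 0$ almost surely, which is precisely the rate-stability condition required; repeating the argument for every OD pair completes the proof.

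The step I expect to be the main obstacle is the last sublinearity claim rather than any of the intermediate algebra: the reflection at zero means one cannot simply telescope, and one has to control $\max_{0\le s\le t}(Z_t-Z_s)/t$, which requires an SLLN applied uniformly over all starting times. In the strictly-nonpositive-drift regime $(\bar A<\bar R)$ this is straightforward, but the knife-edge case $\bar A=\bar R$ (when~\eqref{eq:staticconst1} is tight) requires more care, typically a direct ergodic argument or a stability lemma like the one in~\cite{pedarsani2017robust}. Everything else -- vehicle availability at the right node/energy level, which is underwritten by the flow-balance constraint~\eqref{eq:staticconst2}, and the existence of an implementable integer policy whose time averages match the static flows -- is built into the proposition's hypotheses and does not need to be re-derived.
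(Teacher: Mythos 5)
Your proof is correct, but it follows a genuinely different route from the paper. The paper does not argue on the discrete-time sample paths at all: it introduces an auxiliary vehicle-allocation program \eqref{eq:capacityregionproblem} whose optimal value $\rho^*$ characterizes the capacity region, proves (Lemma~\ref{lem:capacityregion}) via a fluid-scaling argument ($Q_{ij}^{rt}=q_{ij}(\lfloor rt\rfloor)/r$ and a contradiction on the fluid-limit derivative $\dot Q_{ij}^t$) that $\rho^*\le 1$ is necessary and sufficient for rate stability, and then observes that any feasible point of \eqref{eq:flowoptimization} maps, via $x_{ic}^v=x_i^v\alpha_{ic}^v$, $x_{ij}^v=x_i^v\alpha_{ij}^v$, to an allocation with $\rho^*\le 1$. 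You instead work directly with the Lindley recursion and the Loynes bound, combine the SLLN for the (conditionally Poisson) arrivals with the hypothesized time averages and the demand-satisfaction constraint \eqref{eq:staticconst1}, and conclude $q_{ij}(t)/t\to 0$ pathwise. Your argument is more elementary and makes the link between the integer policy's sample paths and the static flows explicit (in the paper this link is delegated to the fluid-model stability framework it cites), while the paper's approach buys more than the proposition needs: it also yields the converse (no policy stabilizes the system when $\rho^*>1$) and the capacity-region interpretation used elsewhere in Section~\ref{sec:static}. One remark on the step you flag as the main obstacle: the knife-edge case $\bar A=\bar R$ is not actually problematic here, because rate stability only requires $o(t)$ growth; once $Z_t/t\to c\le 0$ almost surely, splitting the maximum over $s\le t$ into $s\le S_\epsilon$ (bounded contribution) and $s>S_\epsilon$ (where $|Z_s|\le\epsilon s\le\epsilon t$) already gives $\max_{0\le s\le t}(Z_t-Z_s)=o(t)$ even with zero drift, so no heavier ergodic machinery is needed.
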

The proof of Proposition~\ref{prop:stability} is provided in Appendix~\ref{sec:proofpropstability}. According to Proposition~\ref{prop:stability}, for a static pricing policy with the optimal prices $\ell_{ij}^*$, there exists an integer-valued routing and charging policy that maintains stability of the system.

\begin{corollary}
An example policy that generates integer-valued actions is randomizing according to the flows. Precisely, given a feasible solution $\{\tilde{\ell}_{ij}, \tilde{x}_{ij}^{v}, \tilde{x}_{ic}^{v}\}$ of \eqref{eq:flowoptimization}, integer-valued actions can be generated by routing a vehicle at node $i$ with energy level $v$ to node $j$ with probability
$$\psi_{ij}^v=\frac{\tilde{x}_{ij}^{v}}{\sum_{k=1}^m\tilde{x}_{ik}^{v}+x_{ic}^v},$$
and charging with probability
$$\psi_{ic}^v=\frac{\tilde{x}_{ic}^{v}}{\sum_{k=1}^m\tilde{x}_{ik}^{v}+x_{ic}^v},$$
$\forall i,j\in \cal M$ and $\forall v\in \cal V$. Combining this randomized policy with a static pricing policy of $\ell_{ij}(t)=\tilde{\ell}_{ij},\;\forall t$, results in a policy satisfying the criteria in Proposition~\ref{prop:stability}.
\end{corollary}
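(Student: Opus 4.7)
The plan is to verify the two criteria of Proposition~\ref{prop:stability} separately for the proposed randomized policy paired with static pricing $\ell_{ij}(t)=\tilde{\ell}_{ij}$.

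For the first criterion, since the pricing is held fixed at $\tilde{\ell}_{ij}$ in every period, the induced arrival process for OD pair $(i,j)$ is a sequence of i.i.d.\ Poisson random variables with mean $\lambda_{ij}(1-F(\tilde{\ell}_{ij}))$. A direct application of the strong law of large numbers gives that the time-averaged number of induced arrivals converges almost surely to $\lambda_{ij}(1-F(\tilde{\ell}_{ij}))$, which matches what is required.

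For the second criterion I would model the fleet as a semi-Markov system and track the embedded chain at decision epochs: whenever a vehicle becomes idle at node $i$ with energy level $v$, it independently draws its next action from the distribution $\{\psi_{ij}^v\}_{j\in\mathcal M}\cup\{\psi_{ic}^v\}$, after which a trip to $j$ consumes $\tau_{ij}$ periods and $v_{ij}$ units of energy, and a charging action consumes one period and adds one unit of energy. Let $\pi_i^v$ denote the long-run per-period rate (aggregated across the fleet) of decision epochs occurring at state $(i,v)$. The central claim is that the choice $\pi_i^v=\sum_{k\in\mathcal M}\tilde{x}_{ik}^{v}+\tilde{x}_{ic}^{v}$ satisfies the balance equations of the embedded chain, namely
\[
\pi_i^v \;=\; \pi_i^{v-1}\,\psi_{ic}^{v-1} \;+\; \sum_{j\in\mathcal M} \pi_j^{v+v_{ji}}\,\psi_{ji}^{v+v_{ji}}.
\]
Substituting the definitions of $\psi_{ic}^{v-1}$ and $\psi_{ji}^{v+v_{ji}}$ causes the denominators to cancel the proposed $\pi$'s, and the identity reduces to exactly the flow balance constraint~\eqref{eq:staticconst2}, which is assumed to hold. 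Once this stationary rate vector is identified, an ergodic-theorem argument for the embedded chain gives that the time-averaged number of routing decisions per period of type $(i,j,v)$ converges almost surely to $\pi_i^v\psi_{ij}^v=\tilde{x}_{ij}^v$. Summing over $v\ge v_{ij}$, using constraint~\eqref{eq:staticconst6} to kill the lower terms, yields precisely the time average $\sum_{v=v_{ij}}^{v_{\max}}\tilde{x}_{ij}^v$ demanded by the second criterion of Proposition~\ref{prop:stability}.

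The hard part of this argument is bookkeeping the semi-Markov structure: non-unit travel times $\tau_{ij}$ mean that the rate of decision epochs per period is not the same as the number of vehicles residing at a state, and the fleet is closed, so the rates $\pi_i^v$ must be normalized consistently with the total number of vehicles implicit in the LP solution. I would handle this by applying a renewal-reward argument on the embedded decision chain and observing that summing \eqref{eq:staticconst2} over all $(i,v)$ forces the aggregate per-period decision rate in the LP to coincide with the aggregate rate in the simulated system, so the scaling is automatic. A secondary technical point is irreducibility of the embedded chain on the reachable states, which is needed to invoke the ergodic theorem; under the mild assumption that the LP solution is supported on a communicating class (otherwise one simply restricts attention to the component carrying positive mass), the remainder is a standard application of the ergodic theorem for positive recurrent Markov chains.
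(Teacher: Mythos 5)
The paper itself gives no separate proof of this corollary---it is presented as an immediate verification of the two criteria in Proposition~\ref{prop:stability}---so your task is to supply exactly the argument you sketch, and most of it is sound. Criterion 1 via the strong law of large numbers for i.i.d.\ Poisson arrivals under the fixed prices $\tilde{\ell}_{ij}$ is fine, and your central observation for criterion 2 is the right one: the per-period visit rates of the fleet's embedded decision chain satisfy the jump-chain balance equations, and substituting $\pi_i^v=\sum_{k}\tilde{x}_{ik}^{v}+\tilde{x}_{ic}^{v}$ makes the denominators of $\psi_{ij}^v$ and $\psi_{ic}^v$ cancel, so the balance equations reduce precisely to the flow-balance constraint \eqref{eq:staticconst2}. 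Hence the LP flows are indeed an invariant measure for the decision process.

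The gap is the normalization step. An invariant measure is only determined up to a multiplicative constant on each closed communicating class, and your claim that summing \eqref{eq:staticconst2} over all $(i,v)$ ``forces the aggregate per-period decision rate in the LP to coincide with the aggregate rate in the simulated system'' does not hold: that summation is a tautology (every term appears once on each side) and pins down nothing. The realized per-period decision rate of the actual fleet is fixed, via renewal-reward, by the fleet size together with the holding times ($\tau_{ij}$ periods in transit, one period per charging or idle step), so the time-average routed flow equals $c\,\tilde{x}_{ij}^{v}$ where $c$ is the ratio of the deployed fleet size to the fleet size implicit in the LP solution, namely $\sum_{i,v}\tilde{x}_{ic}^{v}+\sum_{i,j,v}\tau_{ij}\tilde{x}_{ij}^{v}$ (idle vehicles counted through $x_{ii}^v$). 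To obtain criterion 2 with equality you must either make explicit that the fleet size equals this implied value---and, if the support of the LP solution splits into several closed classes, that vehicles are initially apportioned across the classes in proportion to the LP mass on each, since ``restrict to the component carrying positive mass'' is not available when more than one class carries mass---or weaken the conclusion to ``time average at least $\sum_{v\geq v_{ij}}\tilde{x}_{ij}^{v}$,'' which still yields rate stability by the same fluid argument used in the proof of Proposition~\ref{prop:stability}. With that normalization hypothesis stated, the remainder of your ergodic-theorem argument goes through.
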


The optimization problem in \eqref{eq:flowoptimization} is non-convex for a general $F(\cdot)$. Nonetheless, when the platform's profits are convex in the induced demand $\lambda_{ij}(1-F(\cdot))$, it can be rewritten as a convex optimization problem and can be solved exactly. Hence, we assume that the rider's willingness to pay is uniformly distributed in $[0,\ell_{\max}]$, i.e., $F(\ell_{ij})=\frac{\ell_{ij}}{\ell_{\max}}$\footnote{\bt{It is also possible to use other distributions that might reflect real willingness-to-pay distributions more accurately (such as pareto distribution, exponential distribution, triangular distribution, constant elasticity distribution, and normal distribution). Among these, pareto, exponential, and constant elasticity distributions preserve convexity and therefore the static planning problem can be solved efficiently. Triangular and normal distributions are not convex in their support and therefore the static planning problem is not a convex optimization problem. Nevertheless, it can still be solved numerically for the optimal static policy. Using these distributions however we cannot derive the closed-form results that allow us to interpret the pricing policy of the platform operator. The real-time policy proposed in Section~\ref{sec:dynamic} uses model-free Reinforcement Learning and therefore can be applied using other distributions or any other customer price response model.}}.

\vspace{.5em}
\noindent
\textbf{Marginal Pricing:} The prices for rides  are a crucial component of the profits generated. The next proposition highlights how the optimal prices $\ell_{ij}^*$ for rides are related to the network parameters, prices of electricity, and the operational costs.  
\begin{proposition}\label{prop:marginalprices}
Let $\nu_{ij}^*$ be optimal the dual variable corresponding to the demand satisfaction constraint for OD pair $(i,j)$. The optimal prices $\ell_{ij}^*$ are:
\begin{equation}
\label{eq:optimalprices}
    \ell_{ij}^*=\frac{\ell_{\max}+\nu_{ij}^*}{2}.
\end{equation}
These prices can be upper bounded by:
\begin{equation}\label{eq:boundprices}
         \ell_{ij}^*\leq\frac{\ell_{\max}+\beta(\tau_{ij}+\tau_{ji}+v_{ij}+v_{ji})+v_{ij}p_j+v_{ji}p_i}{2}
\end{equation}
Moreover, with these optimal prices $\ell_{ij}^*$, the profits generated per period is:
\begin{equation}
    \label{eq:profits}
    P=\sum_{i=1}^m\sum_{j=1}^m\frac{\lambda_{ij}}{\ell_{\max}}(\ell_{\max}-\ell_{ij}^*)^2.
\end{equation}
\end{proposition}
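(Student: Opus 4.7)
The plan is to apply KKT conditions to \eqref{eq:flowoptimization}: under $F(\ell_{ij})=\ell_{ij}/\ell_{\max}$ the objective \eqref{eq:staticobjective} is concave quadratic in $\ell_{ij}$ and linear in the $x$-variables, while every constraint is affine, so stationarity together with complementary slackness is both necessary and sufficient for optimality. I will denote the multipliers on the demand satisfaction constraints \eqref{eq:staticconst1} by $\nu_{ij}\geq 0$ and the multipliers on the flow balance equalities \eqref{eq:staticconst2} by $\mu_{iv}\in\mathbb{R}$.

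For \eqref{eq:optimalprices}, only the revenue term and \eqref{eq:staticconst1} involve $\ell_{ij}$, and differentiating the Lagrangian with respect to $\ell_{ij}$ gives $\lambda_{ij}(1-2\ell_{ij}^*/\ell_{\max})+\nu_{ij}^*\lambda_{ij}/\ell_{\max}=0$, which rearranges directly into \eqref{eq:optimalprices}.

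For \eqref{eq:boundprices}, I will chain four KKT inequalities along a round-trip cycle $i\to j\to i$. Stationarity with respect to $x_{ij}^v$ yields $\nu_{ij}^*\leq\beta\tau_{ij}+\mu_{iv}^*-\mu_{j,v-v_{ij}}^*$ for every admissible $(i,j,v)$, and stationarity with respect to $x_{ic}^v$ gives $\mu_{i,v+1}^*-\mu_{iv}^*\leq\beta+p_i$, which telescopes into $\mu_{i,v+k}^*-\mu_{iv}^*\leq k(\beta+p_i)$ for any $k\geq 0$. Combining (i) the routing inequality for $i\to j$, (ii) a $v_{ij}$-step charging leg at $j$ to eliminate $\mu_{j,v-v_{ij}}^*$, (iii) the routing inequality for $j\to i$ together with $\nu_{ji}^*\geq 0$ (rearranged so the $\mu$ terms land in the right direction), and (iv) a $v_{ji}$-step charging leg at $i$, every $\mu$-term cancels and the accumulated bound is exactly $\nu_{ij}^*\leq\beta(\tau_{ij}+\tau_{ji}+v_{ij}+v_{ji})+v_{ij}p_j+v_{ji}p_i$; substituting into \eqref{eq:optimalprices} yields \eqref{eq:boundprices}.

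For \eqref{eq:profits}, substitute $\ell_{ij}^*=(\ell_{\max}+\nu_{ij}^*)/2$ to obtain the induced demand $D_{ij}^*=\lambda_{ij}(\ell_{\max}-\nu_{ij}^*)/(2\ell_{\max})$; a one-line algebraic identity then reduces the claim to showing that the total operating and charging cost at the optimum equals $\sum_{ij}\nu_{ij}^*D_{ij}^*$, after which revenue minus cost collapses to $\sum_{ij}\lambda_{ij}(\ell_{\max}-\ell_{ij}^*)^2/\ell_{\max}$. I will establish this cost identity by multiplying the stationarity equalities for positive $x_{ij}^{v*}$ and $x_{ic}^{v*}$ by their primal values and summing over all indices; the net coefficient of each $\mu_{iv}^*$ in the resulting expression is precisely the LHS minus RHS of the flow balance equation \eqref{eq:staticconst2} at $(i,v)$ and therefore vanishes (with the boundary conditions \eqref{eq:staticconst5}--\eqref{eq:staticconst6} ensuring the endpoint indices contribute zero), leaving $\sum_{ij}\nu_{ij}^*\sum_v x_{ij}^{v*}=\sum_{ij}\nu_{ij}^*D_{ij}^*$ by complementary slackness on \eqref{eq:staticconst1}. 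The step most likely to require care is exactly this $\mu$-cancellation, as it demands a careful re-indexing of the shifts $v\to v-v_{ij}$ for dispatched vehicles and $v\to v+1$ for charging so that each $\mu_{iv}^*$ collects only its own flow balance; the remaining steps are routine Lagrangian manipulation.
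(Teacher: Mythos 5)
Your proposal is correct, and for the first two claims it follows essentially the paper's own route: the price formula comes from stationarity of the Lagrangian in $\ell_{ij}$ (the paper phrases this as the inner maximization over $\ell_{ij}$ in the dual), and the upper bound on $\nu_{ij}^*$ is obtained by chaining exactly the same four dual-feasibility inequalities along the round trip $i\to j\to i$ --- routing leg $i\to j$, a $v_{ij}$-step charging telescope at $j$, the routing inequality $j\to i$ together with $\nu_{ji}^*\geq 0$, and a $v_{ji}$-step charging telescope at $i$ --- just with the opposite (equally valid) sign convention on the flow-balance multipliers $\mu$. Where you genuinely diverge is the profit formula: the paper simply notes that, after substituting $\ell_{ij}^*=(\ell_{\max}+\nu_{ij})/2$, the dual objective collapses to $\sum_{ij}\frac{\lambda_{ij}}{\ell_{\max}}\bigl(\frac{\ell_{\max}-\nu_{ij}}{2}\bigr)^2$ and invokes strong duality to equate this with the optimal primal profit, whereas you re-derive the same fact by hand via complementary slackness: multiplying the stationarity conditions for $x_{ij}^v$ and $x_{ic}^v$ by their primal values, summing, and using the flow-balance equations \eqref{eq:staticconst2} (with the boundary conditions \eqref{eq:staticconst5}--\eqref{eq:staticconst8}) to cancel every $\mu$-term, which yields the identity that total operational plus charging cost equals $\sum_{ij}\nu_{ij}^*D_{ij}^*$. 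Your cancellation checks out (the coefficient of each $\mu_i^v$ is precisely the flow-balance residual at $(i,v)$), and your route buys an explicit economic interpretation --- the platform's entire operating and charging expenditure at optimum is $\nu_{ij}^*$ per served ride, which substantiates the paper's reading of $\nu_{ij}^*$ as the per-ride cost --- at the expense of being longer than the paper's one-line strong-duality evaluation. One minor caveat shared with the paper: the inequality chain for \eqref{eq:boundprices} implicitly uses the dual inequalities at an energy level $v$ with $\max(v_{ij},v_{ji})\leq v\leq v_{\max}$, so such a level must exist (i.e., the round trip must be chargeable within the battery capacity); neither you nor the paper states this explicitly, so it is not a gap relative to the paper's own argument.
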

The proof of Proposition~\ref{prop:marginalprices} is provided in Appendix~\ref{sec:proofprop2}. Observe that the profits in Equation \eqref{eq:profits} are decreasing as the prices for rides increase. Thus expensive rides generate less profits compared to the cheaper rides and it is more beneficial if the optimal dual variables $\nu_{ij}^*$ are small and prices are close to $\ell_{\max}/2$. We can interpret the dual variables $\nu_{ij}^*$ as the cost of providing a single ride between $i$ and $j$ to the platform. 
In the worst case scenario, every single requested ride from node $i$ requires rebalancing and charging both at the origin and the destination. Hence the upper bound on \eqref{eq:boundprices} includes the operational costs of passenger-carrying, rebalancing and charging vehicles (both at the origin and the destination); and the energy costs of both passenger-carrying and rebalancing trips multiplied by the price of electricity at the trip destinations.
 Similar to the taxes applied on products, whose burden is shared among the supplier and the customer; the costs associated with rides are shared among the platform operator and the riders (which is why the price paid by the riders include half of the cost of the ride).

Although the static policy guarantees stability (by appropriate implementation of integer-valued actions as dictated by Proposition~\ref{prop:stability}), it does not perform well in a real dynamic setting because it does not acknowledge  the stochastic dynamics of the system. On the other hand, a real-time policy that executes decisions based on the current state of the environment would likely perform better (e.g., if the queue length for OD pair $(i,j)$ is very large, then it is probably better for the platform operator to set higher prices to prevent the queue from growing further). Accordingly, we present a practical way of implementing a real-time policy in the next section.




\section{The Real-Time Policy}\label{sec:dynamic}

The static policy established in the previous section has three major issues:
\begin{enumerate}
    \item Because it is based on a flow model, it generates static fractional flows that are not directly implementable in the real setting.
    \item It neglects the stochastic events that occur in the dynamic setting (e.g., the induced arrivals), and assumes everything is deterministic. Hence, it does consider the unexpected occurrences (e.g., queues might build in the dynamic setting, whereas the static model assumes no queues) when executing actions.
    \item It assumes perfect knowledge of the network parameters (arrivals, trip durations, energy consumptions of the trips, and prices of electricity).
\end{enumerate}

Due to the above reasons, it is impractical to implement the static policy in the dynamic environment. A real-time policy that generates integer solutions and  takes into account the current state of the network which is essential for decision making  is necessary, and can be determined by solving the dynamic program that describes the system (with full knowledge of the network parameters) for the optimal policy. Such solutions would address issues 1 and 2 outlined above. Inspired by our theoretical model, the state information that describes the network fully consists of the vehicle states (locations, energy levels), queue lengths for each OD pair, and electricity prices at each node. Upon obtaining the full state information, the actions have to be executed for pricing for rides and fleet management (vehicle routing and charging). Consequent to taking actions, the platform operator observes a reward (consisting of revenue gained by arrivals, queue costs, and operational and charging costs), and the network transitions into a new state (Although the transition into the new state is stochastic, the random processes that govern this stochastic transition is known if the network parameters are known). The solution of this dynamic program is the optimal policy that determines which action to take for each state the system is in, and can nominally be derived using classical exact dynamic programming algorithms (e.g., value iteration). However, the complexity and the scale of our dynamic problem presents a difficulty here: Aside from having a large dimensional state space (for instance, $m=10,\;v_{\max}=5,\;\tau_{ij}=3\;\forall i,j$:\; the state has dimension 1240) and action space, the cardinality of these spaces are not finite (queues can grow unbounded, prices are continuous). \bt{Considering that the computational complexity per iteration for value iteration is ${\cal O}(|{\cal A}||{\cal S}|^2)$ and for policy iteration $\mathcal{O}(|{\cal A}||{\cal S}|^2+|{\cal S}|^3)$ \cite{kaelbling1996reinforcement}, \btt{where $\cal S$ and $\cal A$ are the state space and the action space, respectively,} the problem is computationally intractable to solve using classical dynamic programming. Even if we did make them finite by putting a cap on the queue lengths and discretizing the prices,} curse of dimensionality renders the problem intractable to solve with classical exact dynamic programming algorithms. As such, we resort to approximate dynamic programming methods. Specifically, we define the policy via a deep neural network that takes the full state information of the network as input and outputs the best action\footnote{In general, the policy is a stochastic policy and determines the probabilities of taking the actions rather than deterministically producing an action.}. Subsequently, we apply a model-free reinforcement learning algorithm to train the neural network in order to improve the performance of the policy. Since it is model-free, it does not require a modeling of the network (hence, it does not require knowledge of the network parameters), which resolves the third issue associated with the static policy. 

We adopted a practical policy gradient method, called \bt{Proximal Policy Optimization (PPO)}, developed in \cite{schulman2017ppo}, which is 
effective for optimizing large nonlinear policies such as neural networks. We chose \bt{PPO} mainly because it supports continuous state-action spaces and guarantees monotonic improvement.\footnote{\bt{Although the policy outputs a continuous set of actions, integer actions can be generated by randomizing. This is done during both training and testing, therefore the RL agent observes the integer state transitions and learns as if the policy outputs integer actions. We discuss how to generate integer actions in more detail in Section~\ref{sec:mdp}}.}

We note that it is possible to apply reinforcement learning to learn a policy in any environment, real or artificial, as long as there is data available. In this work we use our theoretical model described in Section~\ref{systemmodel} to create the environment and generate data, mainly because there is no electric AMoD microsimulation environment available and also to verify our findings about the static policy. Developing a microsimulator for electric AMoD (like SUMO \cite{SUMO2018}) and integrating it with a deep reinforcement learning library to create a framework for real traffic experiments remains a future work. \bt{To ensure that our numerical experiments are reproducible, in the next subsection, we describe the Markov Decision Process (MDP) that governs this dynamic environment, which is a direct extension of our static model. It is also possible to enrich the environment and the MDP to reflect real life constraints more accurately such as road capacity and charging station constraints. Since the approach we adopt to develop the real-time policy is model-free, it can be applied identically.}   

In Section~\ref{sec:numerical} we present numerical results on   real-time policies developed through reinforcement learning based on dynamic environments generated through our theoretical model. The goal of the experiments is to primarily answer the following questions:

\begin{enumerate}
    \item Can we develop a real-time control  and pricing policy for AMoD using reinforcement learning and what are its potential benefits over the static policy?
    \item How does the policy trained for a specific network perform, if the network parameters change?
    \item Can we develop a global policy that can be utilized in any network with moderate fine tuning?
\end{enumerate}

\bt{The reader may skip reading Section \ref{sec:mdp} if they are not interested in the details of the MDP model used in our numerical experiment.}

\bt{\subsection{The Real-Time Problem as MDP}\label{sec:mdp}
We define the MDP by the tuple $(\mathcal{S}, \mathcal{A}, \mathcal{T}, r)$, where $\mathcal{S}$ is the state space, $\mathcal{A}$ is the action space, $\mathcal{T}$ is the state transition operator and $r$ is the reward function. We describe these elements as follows:
\begin{enumerate}[wide, labelwidth=!]
    \item $\mathcal{S}$: The state space consists of prices of electricity at each node, the queue lengths for each origin-destination pair, and the number of vehicles at each node and each energy level. However, since travelling from node $i$ to node $j$ takes $\tau_{ij}$ periods of time, we need to define intermediate nodes. As such, we define $\tau_{ij}-1$ number of intermediate nodes between each origin and destination pair, for each battery energy level $v$. Hence, the state space consists of $s_d=m^2+(v_{\max}+1)((\sum_{i=1}^m\sum_{j=1}^m \tau_{ij})-m^2+2m)$ dimensional vectors in  $\mathbb{R}_{\geq 0}^{s_d}$ (We include all the non-negative valued vectors, however, only $m^2-m$ entries can grow to infinity because they are queue lengths, and the rest are always upper bounded by fleet size or maximum price of electricity). As such, we define the elements of the state vector at time $t$ as $\boldsymbol{s}(t)=[\boldsymbol{p}(t)\;\boldsymbol{q}(t)\;\boldsymbol{s_{veh}}(t)]$, where $\boldsymbol{p}(t)=[p_i(t)]_{i\in\mathcal{M}}$ is the electricity prices state vector, $\boldsymbol{q}(t)=[q_{ij}(t)]_{i,j\in\mathcal{M};i\neq j}$ is the queue lengths state vector, and $\boldsymbol{s_{veh}}(t)=[s_{ijk}^v(t)]_{\forall i,j,k,v}$ is the vehicle state vector, where $s_{ijk}^v(t)$ is the number of vehicles at vehicle state $(i,j,k,v)$. The vehicle state $(i,j,k,v)$ specifies the location of a vehicle that is travelling between OD pair $(i,j)$ as the $k$'th intermediate node between nodes $i$ and $j$, and specifies the battery energy level of a vehicle as $v$ (The states of the vehicles at the nodes $i \in \mathcal{M}$ with energy level $v$ is denoted by $(i,i,0,v)$).
    \item $\mathcal{A}$: The action space consists of prices for rides at each origin-destination pair and routing/charging decisions for vehicles at nodes $i\in\mathcal{M}$ at each energy level $v$. The price actions are continuous in range $[0,\ell_{\max}]$. Each vehicle at state $(i,i,0,v)$ ($\forall i \in \mathcal{M},\; \forall v \in \mathcal{V}$) can either charge, stay idle or travel to one of the remaining $m-1$ nodes. To allow for different transitions for vehicles at the same state (some might charge, some might travel to another node), we define the action taken at time $t$ for vehicles at state $(i,i,0,v)$ as an $m+1$ dimensional probability vector with entries in $[0,1]$ that sum up to 1: $\boldsymbol{\alpha_i^v}(t)=[\alpha_{i1}^v(t)\dots \alpha_{im}^v(t)\; \alpha_{ic}^v(t)]$, where $\alpha_{ic}^{v_{\max}}(t)=0$ and $\alpha_{ij}^v(t)=0$ if $v<v_{ij}$. The action space is then all the vectors $\boldsymbol{a}$ of dimension $a_d=m^2-m+(v_{\max}+1)(m^2+m)$, whose first $m^2-m$ entries are the prices and the rest are the probability vectors satisfying the aforementioned properties. As such, we define the elements of the action vector at time $t$ as $\boldsymbol{a}(t)=[\boldsymbol{\ell}(t)\;\boldsymbol{\alpha}(t)]$, where $\boldsymbol{\ell}(t)=[\ell_{ij}]_{i,j\in\mathcal{M},i\neq j}$ is the vector of prices and $\boldsymbol{\alpha}(t)=[\boldsymbol{\alpha_i^v}(t)]_{\forall i,v}$ is the vector of routing/charging actions.
    \item $\mathcal{T}$: The transition operator is defined as $\mathcal{T}_{ijk}=Pr(\boldsymbol{s}(t+1)=j|\boldsymbol{s}(t)=i,\; \boldsymbol{a}(t)=k)$. We can define the transition probabilities for electricity prices $\boldsymbol{p}(t+1)$, queue lengths $\boldsymbol{q}(t+1)$, and vehicle states $\boldsymbol{s_{veh}}(t+1)$ as follows:
    
        \noindent\textbf{Electricity Price Transitions:} Since we assume that the dynamics of prices of electricity are exogenous to our AMoD system, $Pr(\boldsymbol{p}(t+1)=\boldsymbol{p_2}|\boldsymbol{p}(t)=\boldsymbol{p}_1,\;\boldsymbol{a}(t)) =$ $Pr(\boldsymbol{p}(t+1)=\boldsymbol{p}_2|\boldsymbol{p}(t)=\boldsymbol{p_1})$, i.e., the dynamics of the price are independent of the action taken. Depending on the setting, new prices might either be deterministic or distributed according to some probability density function at time $t$: $\boldsymbol{p}(t)\sim \mathcal{P}(t)$, which is determined by the electricity provider.
    
    \noindent\textbf{Vehicle Transitions:} For each vehicle at node $i$ and energy level $v$, the transition probability is defined by the action probability vector $\boldsymbol{\alpha_i^v}(t)$. Each vehicle transitions into state $(i,j,1,v-v_{ij})$ with probability $\alpha_{ij}^v(t)$, stays idle in state $(i,i,0,v)$ with probability $\alpha_{ii}^v(t)$ or charges and transitions into state  $(i,i,0,v+1)$ with probability $\alpha_{ic}^v(t)$. The vehicles at intermediate states $(i,j,k,v)$ transition into state $(i,j,k+1,v)$ if $k<\tau_{ij}-1$ or $(j,j,0,v)$ if $k=\tau_{ij}-1$ with probability 1. The total transition probability to the vehicle states $\boldsymbol{s_{veh}}(t+1)$ given $\boldsymbol{s_{veh}}(t)$ and $\boldsymbol{\alpha}(t)$ is the sum of all the probabilities of the feasible transitions from $\boldsymbol{s_{veh}}(t)$ to $\boldsymbol{s_{veh}}(t+1)$ under $\boldsymbol{\alpha}(t)$, where the probability of a feasible transition is the multiplication of individual vehicle transition probabilities (since the vehicle transition probabilities are independent). Note that instead of gradually dissipating the energy of the vehicles on their route, we immediately discharge the required energy for the trip from their batteries and keep them constant during the trip. This ensures that the vehicles have enough battery to complete the ride and does not violate the model, because the vehicles arrive to their destinations with true value of energy and a new action will only be taken when they reach the destination.
    
        \noindent\textbf{Queue Transitions:}
    The queue lengths transition according to the prices and the vehicle routing decisions. For prices $\ell_{ij}(t)$ and induced arrival rate $\Lambda_{ij}(t)$, the probability that $A_{ij}(t)$ new customers arrive in the queue $(i,j)$ is:
    $$
    Pr(A_{ij}(t))=\frac{e^{-\Lambda_{ij}(t)}\Lambda_{ij}(t)^{A_{ij}(t)}}{(A_{ij}(t))!}
    $$
    Let us denote the total number of vehicles routed from node $i$ to $j$ at time $t$ as $x_{ij}(t)$, which is given by: \begin{equation}x_{ij}(t) =\sum_{v=v_{ij}}^{v_{\max}}x_{ij}^v(t)=\sum_{v=v_{ij}}^{v_{\max}}s_{ij1}^{v-{v_{ij}}}(t+1).\end{equation}
    Given $\boldsymbol{s_{veh}}(t+1)$ and  $x_{ij}(t)$, the probability that the queue length $q_{ij}(t+1)=q$ is:
\begin{equation*}
\begin{aligned}
&Pr(q_{ij}(t+1)=q|\boldsymbol{s}(t),\boldsymbol{a}(t),\boldsymbol{s_{veh}}(t+1))=
Pr(A_{ij}(t)=q-q_{ij}(t)+x_{ij}(t)),
\end{aligned}
\end{equation*}
    if $q>0$, and $Pr(A_{ij}(t)\leq-q_{ij}(t)+x_{ij}(t))$ if $q=0$. Since the arrivals are independent, the total probability that the queue vector $\boldsymbol{q}(t+1)=\boldsymbol{q}$ is:
\begin{equation*}
\begin{aligned}
&Pr(\boldsymbol{q}(t+1)=\boldsymbol{q}|\boldsymbol{s}(t),\boldsymbol{a}(t),\boldsymbol{s_{veh}}(t+1))=
\prod_{i\in{\cal M}}\prod_{\substack{j\in{\cal M}\\j\neq i}} Pr(q_{ij}(t+1)|\boldsymbol{s}(t),\boldsymbol{a}(t),\boldsymbol{s_{veh}}(t+1)).
\end{aligned}
\end{equation*}
Hence, the transition probability is defined as:
\begin{equation}
    \label{eq:transition}
    \begin{aligned}
    Pr(\boldsymbol{s}(t+1)|\boldsymbol{s}(t),\boldsymbol{a}(t))=& Pr(\boldsymbol{p}(t+1)|\boldsymbol{p}(t))
    \times Pr(\boldsymbol{s_{veh}}(t+1)|\boldsymbol{s}(t),\boldsymbol{\alpha}(t))\\
    &\times Pr(\boldsymbol{q}(t+1)|\boldsymbol{s}(t),\boldsymbol{\alpha}(t),\boldsymbol{s_{veh}}(t+1))
    \end{aligned}
\end{equation}
We illustrate how the vehicles and queues transition into new states consequent to an action in Figure \ref{fig:transitionschema}.

\begin{figure*}[t]
    \centering
    \includegraphics[width=\textwidth]{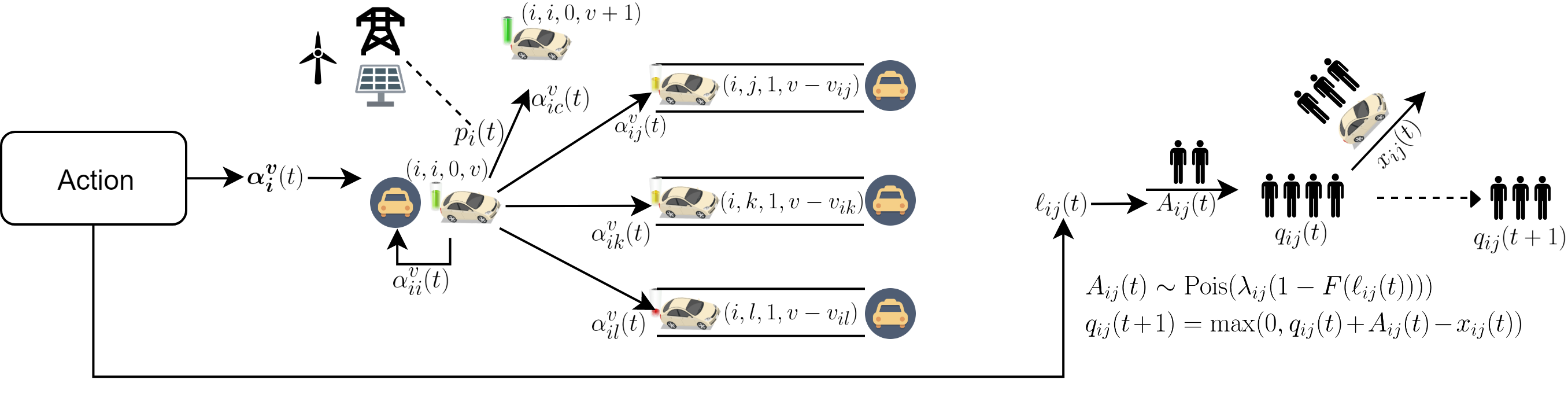}
    \caption{The schematic diagram representing the state transition of our MDP. Upon taking an action, a vehicle at state $(i,i,0,v)$ charges for a price of $p_i(t)$ and transitions into state $(i,i,0,v+1)$ with probability $\alpha_{ic}^v(t)$, stays idle at state $(i,i,0,v)$ with probability $\alpha_{ii}^v(t)$, or starts traveling to another node $j$ and transitions into state $(i,j,1,v-v_{ij})$ with probability $\alpha_{ij}^v(t)$. Furthermore, $A_{ij}(t)$ new customers arrive to the queue $(i,j)$ depending on the price $\ell_{ij}(t)$. After the routing and charging decisions are executed for all the EVs in the fleet, the queues are modified.}
    \label{fig:transitionschema}
\end{figure*}

        \item $r$: The reward function $r(t)$ is a function of state-action pairs at time $t$: $r(t)=r(\boldsymbol{a}(t),\boldsymbol{s}(t))$. Let $x_{ic}^v(t)$ denote the number of vehicles charging at node $i$ starting with energy level $v$ at time period $t$. The reward function $r(t)$ is defined as:
\begin{align}
    \nonumber r(t)=&\sum_{i\in \cal{M}}\sum_{\substack{j\in \cal{M}\\j\neq i}}\ell_{ij}(t)A_{ij}(t)-w\sum_{\in \cal{M}}\sum_{\substack{j\in \cal{M}\\j\neq i}} q_{ij}(t)-\sum_{i\in \cal{M}}\sum_{v=0}^{v_{\max}-1}(\beta+p_i)x_{ic}^v(t)\\&-\beta\sum_{i\in \cal{M}}\sum_{\substack{j\in \cal{M}\\j\neq i}} x_{ij}(t)-\beta\sum_{i\in \cal{M}}\sum_{\substack{j\in \cal{M}\\j\neq i}}\sum_{k=1}^{\tau_{ij}-1}\sum_{v=0}^{v_{\max}-1}s_{ijk}^v(t)
\end{align}
The first term corresponds to the revenue generated by the passengers that request a ride for a price $\ell_{ij}(t)$, the second term is the queue cost of the passengers that have not yet been served, the third term is the charging and operational costs of the charging vehicles and the last two terms are the operational costs of the vehicles making trips. Note that revenue generated is immediately added to the reward function when the passengers enter the network instead of after the passengers are served. Since the reinforcement learning approach is based on maximizing the cumulative reward gained, all the passengers eventually have to be served in order to prevent queues from blowing up and hence it does not violate the model to add the revenues immediately.
        \end{enumerate}

Using the definitions of the tuple $(\mathcal{S}, \mathcal{A}, \mathcal{T}, r)$, we model the dynamic problem as an MDP. Given large-dimensional state and action spaces with infinite cardinality, we can not solve the MDP using exact dynamic programming methods. As a solution, we characterize the real-time policy via a deep neural network and execute reinforcement learning in order to develop a real-time policy.
\subsection{Reinforcement Learning Method}
In this subsection, we go through the preliminaries of reinforcement learning and briefly explain the idea of the algorithm we adopted.
\subsubsection{Preliminaries}
 The real-time policy associated with the MDP is defined as a function parameterized by $\theta$: $$\pi_\theta(\boldsymbol{a}|\boldsymbol{s})=\pi:\mathcal{S}\times \mathcal{A}\rightarrow [0,1],$$ i.e., a probability distribution in the state-action space. Given a state $\boldsymbol{s}$, the policy returns the probability for taking the action $\boldsymbol{a}$ (for all actions), and samples an action according to the probability distribution. The goal is to derive the optimal policy $\pi^*$, which maximizes the discounted cumulative expected rewards $J_{\pi}$:
\begin{equation*}
    \label{eq:discountedrewards}
    \begin{aligned}
        &J_{\pi^*}=\underset{\pi}{\max}\;J_{\pi}=\underset{\pi}{\max}\;\mathbb{E_\pi}\left[\sum_{t=0}^\infty\gamma^tr(t)\right],\\
        &\pi^*=\underset{\pi}{\argmax}\;\mathbb{E_\pi}\left[\sum_{t=0}^\infty\gamma^tr(t)\right],
    \end{aligned}
\end{equation*}
where $\gamma \in (0,1]$ is the discount factor. The value of taking an action $\boldsymbol{a}$ in state $\boldsymbol{s}$, and following the policy $\pi$ afterwards is characterized by the value function $Q_\pi(\boldsymbol{s},\boldsymbol{a})$:
\begin{equation*}
    \label{eq:qfunct}
    Q_\pi(\boldsymbol{s},\boldsymbol{a})=\mathbb{E}_\pi\left[\sum_{t=0}^\infty \gamma^tr(t)|\boldsymbol{s}(0)=\boldsymbol{s},\boldsymbol{a}(0)=\boldsymbol{a}\right].
\end{equation*}
The value of being in state $\boldsymbol{s}$ is formalized by the value function $V_\pi(\boldsymbol{s})$:
\begin{equation*}
    V_\pi(\boldsymbol{s})=\mathbb{E}_{\boldsymbol{a}(0),\pi}\left[\sum_{t=0}^\infty \gamma^t r(t)|\boldsymbol{s}(0)=\boldsymbol{s}\right],
\end{equation*}
and the advantage of taking the action $\boldsymbol{a}$ in state $\boldsymbol{s}$ and following the policy $\pi$ thereafter is defined as the advantage function $A_\pi(\boldsymbol{s},\boldsymbol{a})$:
\begin{equation*}
    A_\pi(\boldsymbol{s},\boldsymbol{a})=Q_\pi(\boldsymbol{s},\boldsymbol{a})- V_\pi(\boldsymbol{s}).
\end{equation*}
The methods used by reinforcement learning algorithms can be divided into three main groups: 1) critic-only methods, 2) actor-only methods, and 3) actor-critic methods, where the word critic refers to the value function and the word actor refers to the policy \cite{surveyrl}. Critic-only (or value-function based) methods (such as Q-learning \cite{qlearning1} and SARSA \cite{sarsa}) improve a deterministic policy using the value function by iterating:
\begin{equation*}
\begin{aligned}
    &\boldsymbol{a^*}=\underset{\boldsymbol{a}}{\argmax}\;Q_{\pi}(\boldsymbol{s},\boldsymbol{a}),\\
    &\pi(\boldsymbol{a^*}|\boldsymbol{s})\longleftarrow 1.
\end{aligned}
\end{equation*}
Actor-only methods (or policy gradient methods), such as Williams' REINFORCE algorithm \cite{reinforce}, improve the policy by updating the parameter $\theta$ by gradient ascent, without using any form of a stored value function:
\begin{equation*}
    \theta(t+1)=\theta(t)+\alpha\nabla_\theta \mathbb{E}_{\pi_{\theta(t)}}\left[\sum_\tau\gamma^\tau r(\tau)\right].
\end{equation*}
The advantage of policy gradient methods is their ability to generate actions from a continuous action space by utilizing a parameterized policy.

Finally, actor-critic methods \cite{actorcritic1,actorcritic2} make use of both the value functions and policy gradients:
\begin{equation*}
    \theta(t+1)=\theta(t)+\alpha\nabla_\theta \mathbb{E}_{\pi_{\theta(t)}}\left[Q_{\pi_{\theta(t)}}(\boldsymbol{s},\boldsymbol{a})\right].
\end{equation*}
Actor-critic methods are able to produce actions in a continuous action space, while reducing the high variance of the policy gradients by adding a critic (value function).

All of these methods aim to update the parameters $\theta$ (or directly update the policy $\pi$ for critic-only methods) to improve the policy. In deep reinforcement learning, the policy $\pi$ is defined by a deep neural network, whose weights constitute the parameter $\theta$. To develop a real-time policy for our MDP, we adopt a practical policy gradient method called Proximal Policy Optimization (PPO).
\subsubsection{Proximal Policy Optimization}
PPO is a practical policy gradient method developed in \cite{schulman2017ppo}, and is effective for optimizing large non-linear policies such as deep neural networks. It preserves some of the benefits of trust region policy optimization (TRPO) \cite{schulman2015trpo} such as monotonic improvement, but is much simpler to implement because it can be optimized by a first-order optimizer, and is empirically shown to have better sample complexity.

In TRPO, an objective function (the ``surrogate'' objective) is maximized subject to a constraint on the size of the policy update so that the new policy is not too far from the old policy:
\begin{subequations}
\begin{align}
&\underset{\theta}{\text{maximize}}& & \hat{\mathbb{E}}_t\left[\frac{\pi_\theta(\bm a_t|\bm s_t)}{\pi_{\theta_{old}}(\bm a_t|\bm s_t)}\hat{A}_t\right] \\
& \text{subject to}
& & \hat{\mathbb{E}}_t\left[\textnormal{KL}\left[\pi_{\theta_{old}}(\cdot|\bm s_t),\pi_\theta(\cdot|\bm s_t)\right]\right]\leq \delta,
\end{align}
\end{subequations}
where $\pi_\theta$ is a stochastic policy and $\hat{A}_t$ is an estimator of the advantage function at timestep $t$. The expectation $\hat{\mathbb{E}}_t[\dots]$ indicates the empirical average over a finite batch of samples and $\textnormal{KL}\left[\pi_{\theta_{old}}(\cdot|\bm s_t),\pi_\theta(\cdot|\bm s_t)\right]$ denotes the Kullback–Leibler divergence between $\pi_{\theta_{old}}$ and $\pi$. Although TRPO  solves the above constrained maximization problem using conjugate gradient, the theory justifying TRPO actually suggests using a penalty instead of a constraint, i.e., solving the unconstrained optimization problem
\begin{equation}
    \underset{\theta}{\textnormal{maximize}}~\hat{\mathbb{E}}_t\left[\frac{\pi_\theta(\bm a_t|\bm s_t)}{\pi_{\theta_{old}}(\bm a_t|\bm s_t)}\hat{A}_t-\beta\textnormal{KL}\left[\pi_{\theta_{old}}(\cdot|\bm s_t),\pi_\theta(\cdot|\bm s_t)\right] \right],
\end{equation}
for some penalty coefficient $\beta$. TRPO uses a hard constraint rather than a penalty because it is hard to choose a single value of $\beta$ that performs well. To overcome this issue and develop a  first-order algorithm that emulates the monotonic improvement of TRPO (without solving the constrained optimization problem), two PPO algorithms are constructed by: 1) clipping the surrogate objective and 2) using adaptive KL penalty coefficient \cite{schulman2017ppo}.
\begin{enumerate}
    \item \textit{Clipped Surrogate Objective:} Let $r_t(\theta)$ denote the probability ratio $r_t(\theta)=\frac{\pi_\theta(\bm a_t|\bm s_t)}{\pi_{\theta_{old}}(\bm a_t|\bm s_t)}$, so $r(\theta_{old})=1$. TRPO maximizes
    \begin{equation}
        L(\theta)=\hat{\mathbb{E}}_t\left[\frac{\pi_\theta(\bm a_t|\bm s_t)}{\pi_{\theta_{old}}(\bm a_t|\bm s_t)}\hat{A}_t\right]=\hat{\mathbb{E}}_t\left[r_t(\theta)\hat{A}_t\right].
    \end{equation}
    subject to the KL divergence constraint. Without a constraint however this would lead to a large policy update. To prevent this, PPO modifies the surrogate objective to penalize changes to the policy that move $r_t(\theta)$ away from 1:
    \begin{equation}
        L^{CLIP}(\theta)=\hat{\mathbb{E}}_t\left[\min(r_t(\theta)\hat{A}_t,\textnormal{clip}(r_t(\theta),1-\epsilon,1+\epsilon)\hat{A}_t)\right],
    \end{equation}
    where $\epsilon$ is a hyperparameter, usually 0.1 or 0.2. The term $\textnormal{clip}(r_t(\theta),1-\epsilon,1+\epsilon)\hat{A}_t)$ modifies the surrogate objective by clipping the probability ratio, which removes the incentive for moving $r_t$ outside of the interval $[1-\epsilon,1+\epsilon]$. By taking the minimum of the clipped and the unclipped objective, the final objective becomes a lower bound on the unclipped objective.
    \item \textit{Adaptive KL Penalty Coefficient:} Another approach is to use a penalty on KL divergence and to adapt the penalty coefficient so that some target value of the KL divergence $d_{\textnormal{targ}}$ is achieved at each policy update. In each policy update, the following steps are performed:
    \begin{itemize}
        \item Using several epochs of minibatch SGD, optimize the KL-penalized objective
\begin{equation}
    L^{KLPEN}(\theta)=\hat{\mathbb{E}}_t\left[\frac{\pi_\theta(\bm a_t|\bm s_t)}{\pi_{\theta_{old}}(\bm a_t|\bm s_t)}\hat{A}_t-\beta\textnormal{KL}\left[\pi_{\theta_{old}}(\cdot|\bm s_t),\pi_\theta(\cdot|\bm s_t)\right] \right]
\end{equation}
\item Compute $d=\hat{\mathbb{E}}_t\left[\textnormal{KL}\left[\pi_{\theta_{old}}(\cdot|\bm s_t),\pi_\theta(\cdot|\bm s_t)\right]\right]$
\begin{itemize}
    \item If $d<d_{\textnormal{targ}}/1.5,~\beta\leftarrow \beta/2$
    \item If $d>d_{\textnormal{targ}}\times 1.5,~\beta\leftarrow \beta\times 2$.
\end{itemize}
    \end{itemize}
    The updated $\beta$ is then used for the next policy update. This scheme allows $\beta$ to adjust if KL divergence is significantly different than $d_{\textnormal{targ}}$ so that the desired KL divergence between the old and the updated policy is attained.
\end{enumerate}
A PPO algorithm using fixed-length trajectory segments is summarized in Algorithm~\ref{alg:ppo}. Each iteration, each of $N$ (parallel) actors collect $T$ timesteps of data. Then the surrogate loss on these $NT$ timesteps of data is constructed and optimized  with minibatch SGD for $K$ epochs.
\begin{algorithm}[h]
\SetAlgoLined
 
 \For{$\textnormal{iteration}=0,1,2,\dots$}{
 \For{$\textnormal{actor}=1,2,\dots, N$}{
   Run policy $\pi_{\theta_{old}}$ in environment for $T$ timesteps.\\
   Compute advantage estimates $\hat{A}_1,\dots,\hat{A}_T$}
    Optimize surrogate $L^{CLIP}$ or $L^{KLPEN}$ w.r.t. $\theta$, with $K$ epochs and minibatch size $M\leq NT$.\\
    $\theta_{old}\leftarrow\theta$
    }
 \caption{PPO, Actor-Critic Style}\label{alg:ppo}
\end{algorithm}

In this work, we used the PPO algorithm with the clipped surrogate objective, because experimentally it it shown to have better performance than the PPO algorithm with adaptive KL penalty coefficient \cite{schulman2017ppo}. We refer the reader to \cite{schulman2017ppo} for a comprehensive study on PPO algorithms. 

In the next section, we present our numerical studies demonstrating the performance of the RL policy.
}

\section{Numerical Study}\label{sec:numerical}
In this section, we discuss the numerical experiments and results for the performance of reinforcement learning approach to the dynamic problem and compare with the performance of several static policies, including the optimal static policy outlined in Section \ref{sec:static}. We solved for the optimal static policy using CVX, a package for specifying and solving convex programs \cite{cvx}. To implement the dynamic environment compatible with reinforcement learning algorithms, we used Gym toolkit \cite{1606.01540} developed by OpenAI to create an environment. For the  implementation of the \bt{PPO} algorithm, we used Stable Baselines toolkit \cite{stable-baselines}.

We chose an operational cost of $\beta=\$0.1$ (by normalizing the average price of an electric car  over 5 years \cite{avgevprice}) and maximum willingness to pay  $\ell_{\max}=\$30$. For prices of electricity $p_i(t)$, we generated random prices for different locations and different times using the statistics of locational marginal prices in \cite{electricityprices}. We chose a maximum battery capacity of $20$kWh. We discretrized the battery energy into 5 units, where one unit of battery energy is $4$kWh. The time it takes to deliver one unit of charge is taken as one time epoch, which is equal to $5$ minutes in our setup. The waiting time cost for one period is $w=\$2$ (average hourly wage is around $\$24$ in the United States \cite{avgwages}). 

Note that the dimension of the state space grows significantly with battery capacity $v_{\max}$, because it expands the states each vehicle can have by $v_{\max}$.
Therefore, for computational purposes, we conducted two case studies: 1) Non-electric AMoD case study with a larger network in Manhattan, 2) Electric AMoD case study with a smaller network in San Francisco. \btt{We picked two different real world networks in order to demonstrate the universality of reinforcement learning method  in establishing a real-time policy. In particular, our intention is to support the claim that the success of the reinforcement learning method is not restricted to a single network, but generalizes to multiple real world networks.} Both experiments were performed on a laptop computer with Intel$^\textnormal{\textregistered}$ Core$^\textnormal{TM}$ i7-8750H CPU (6$\times$2.20 GHz) and 16 GB DDR4 2666MHz RAM.
\subsection{Case Study in Manhattan}\label{sec:man}

In a non-electric AMoD network, the energy dimension $v$ vanishes. Because there is no charging action\footnote{The vehicles still refuel, however this takes negligible time compared to the trip durations.}, we can perform coarser discretizations of time. Specifically, we can allow each discrete time epoch to cover $5\times\underset{\substack{i,j|i\neq j}}{\min\;}\tau_{ij}$ minutes, and normalize the travel times $\tau_{ij}$ and $w$ accordingly
(For EV's, because charging takes a non-negligible but shorter time than travelling, in general we have $\tau_{ij}>1$, and larger number of states). The static profit maximization problem in \eqref{eq:flowoptimization} for AMoD with non-electric vehicles can  be rewritten as:

\begin{equation}
\label{eq:nonelectricflowoptimization}
\begin{aligned}
&\underset{x_{ij},\ell_{ij}}{\text{max}}
& &\sum_{i\in{\cal M}}\sum_{j\in{\cal M}}\lambda_{ij}\ell_{ij}(1-F(\ell_{ij}))-\beta_g \sum_{i\in{\cal M}}\sum_{j\in{\cal M}} x_{ij} \tau_{ij} \\
& \text{subject to}
& &\lambda_{ij}(1-F(\ell_{ij})) \leq x_{ij} \quad\forall i,j \in \mathcal M,\\
& & &\sum_{j\in{\cal M}} x_{ij}=\sum_{j\in{\cal M}} x_{ji}\quad\forall i\in\mathcal M,\\
& & & x_{ij}\geq 0\; ~\forall i,j\in \mathcal M.
\end{aligned}
\end{equation}


The operational costs $\beta_g=\$2.5$ (per 10 minutes, \cite{avgcostofdriving}) are different than those of electric vehicles. Because there is no ``charging" (or refueling action, since it takes negligible time), $\beta_g$ also includes fuel cost. The optimal static policy is used to compare and highlight the performance of the real-time policy\footnote{\label{ft:convertstaticactions}The solution of the static problem yields vehicle flows. In order to make the policy compatible with our environment and to generate integer actions that can be applied in a dynamic setting, we randomized the actions by dividing each flow for OD pair $(i,j)$ (and energy level $v$) by the total number of vehicles in $i$ (and energy level $v$) and used that fraction as the probability of sending a vehicle from $i$ to $j$ (with energy level $v$).}.

\begin{wrapfigure}{r}{0.16\textwidth}
\centering
    \includegraphics[width=.18\textwidth]{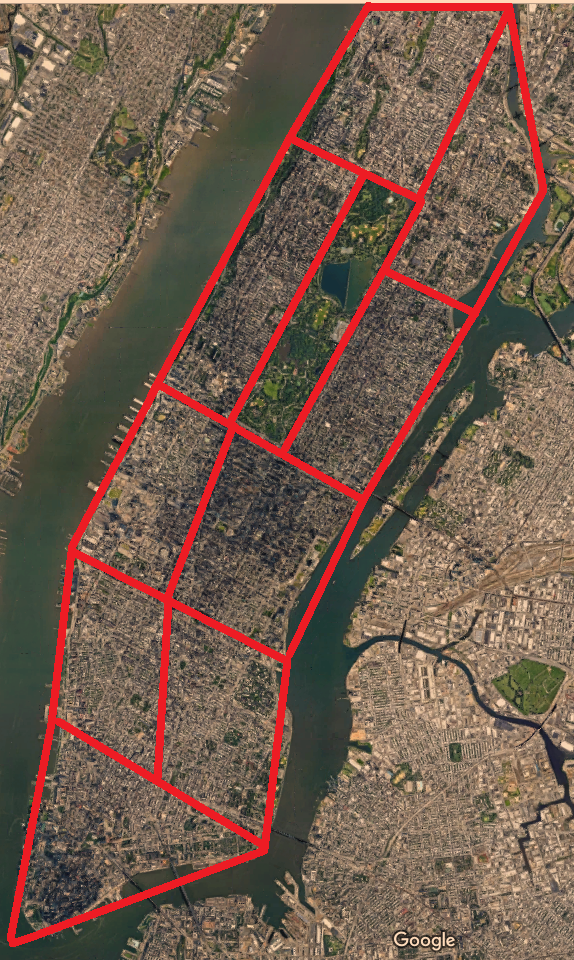}
    \vspace{-.75cm}
  \caption{Manhattan divided into $m=10$ regions.}
  \label{fig:manhattanregions}
  \vspace{-.75cm}
\end{wrapfigure}
We divided Manhattan into 10 regions as in Figure \ref{fig:manhattanregions}, and using the yellow taxi data from the New York City Taxi and Limousine Commission
dataset \cite{manhattantaxidata} for May 04, 2019, Saturday between 18.00-20.00, we extracted the average arrival rates for rides and trip durations  between the regions (we exclude the rides occurring in the same region). We trained our model by creating new induced random arrivals with the same average arrival rate using prices determined by our policy. For the fleet size, we used a fleet of 1200 autonomous vehicles (according to the optimal fleet size emerging from the static problem).

For training, we used a neural network with \bt{4} hidden layers and \bt{128} neurons in each hidden layer. The rest of the parameters are left as default as specified by the Stable Baselines toolkit \cite{stable-baselines}. \bt{In order to get the best policy, we train 3 different models using DDPG\cite{lillicrap2015ddpg}, TRPO\cite{schulman2015trpo}, and PPO.  We trained the models for 10 million iterations, and the performances of the trained models are summarized in Table~\ref{tab:rlalgs} using average rewards and queue lengths as metrics. Our experiments indicate that the model trained using PPO is performing the best among the three, hence we use that model as our real-time policy.}
\definecolor{lightblue}{RGB}{135,206,250}

\begin{table}[h]
    \centering
\bt{
    \begin{tabular}{|c|c|c|c|}
    \hline
        \backslashbox{Metrics}{Algorithms} & DDPG & TRPO & PPO  \\\hline
        Average Rewards& 9825.69& 13142.47&\textbf{15527.34}\\\hline
        Average Queue Length&  431.76&87.96 &\textbf{68.11} \\\hline
    \end{tabular}}
    \caption{Performances of RL policies trained with different algorithms.}
    \label{tab:rlalgs}
\end{table}


We compare different policies' performance using the rewards and total queue length as metrics. The results are demonstrated in Figure \ref{fig:manhattanresults}.
In Figure \ref{fig:staticvsdynamic} we compare the rewards generated and the total queue length by applying the static and the real-time policies as defined in Sections \ref{sec:static} and \ref{sec:dynamic}.  We can observe that while the optimal static policy provides rate stability in a dynamic setting (since the queues do not blow up),  it fails to generate profits as it is not able to clear the queues. On the other hand, the real-time policy is able to keep the total length of the queues 100 times shorter than the static policy while generating higher profits.

\begin{figure*}  
\begin{subfigure}[b]{0.4\textwidth}
\centering
\includegraphics[width=\linewidth]{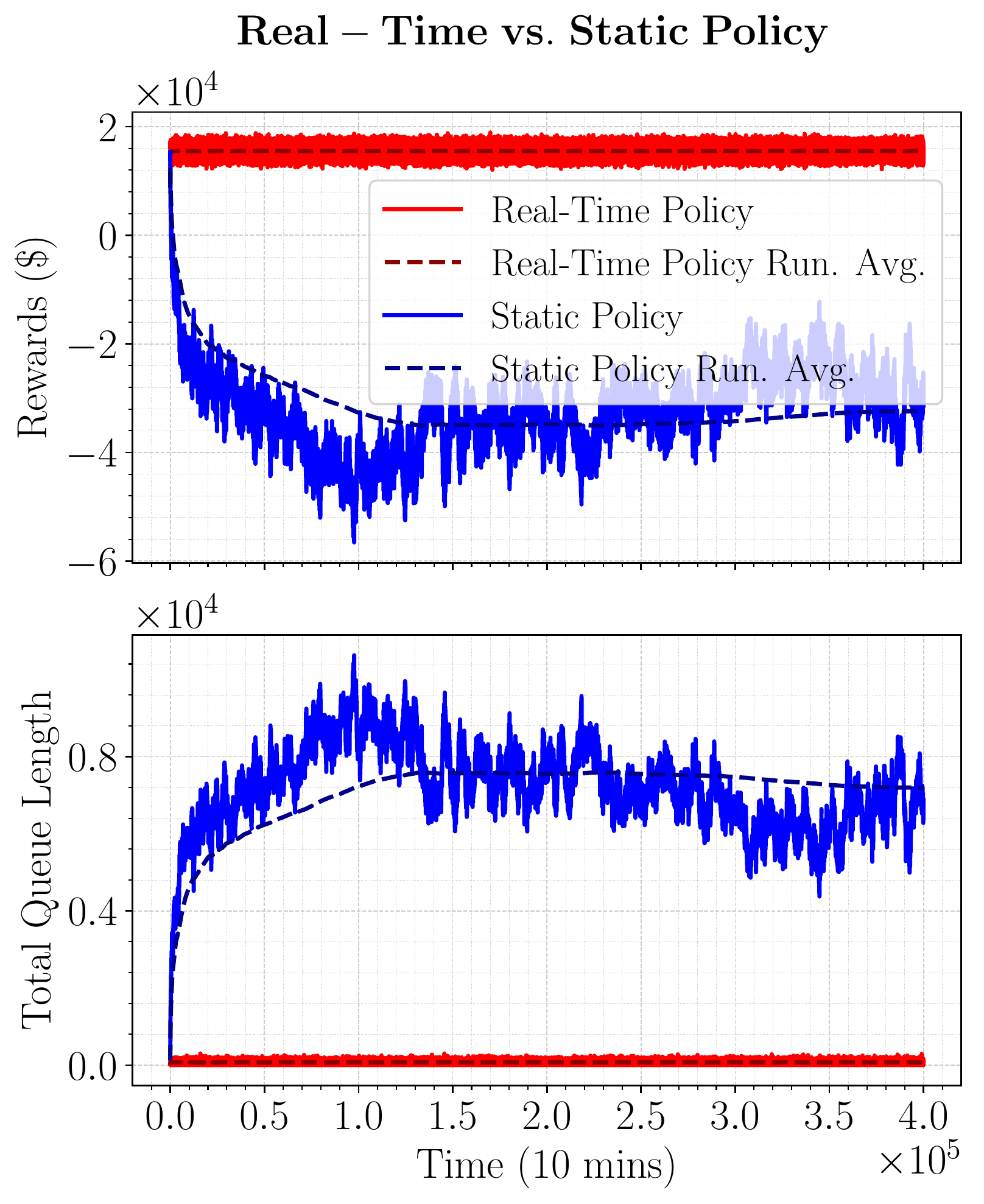}
\caption{}
\label{fig:staticvsdynamic}
\end{subfigure}
\hfill 
\begin{subfigure}[b]{0.4\textwidth}
\centering
\includegraphics[width=\linewidth]{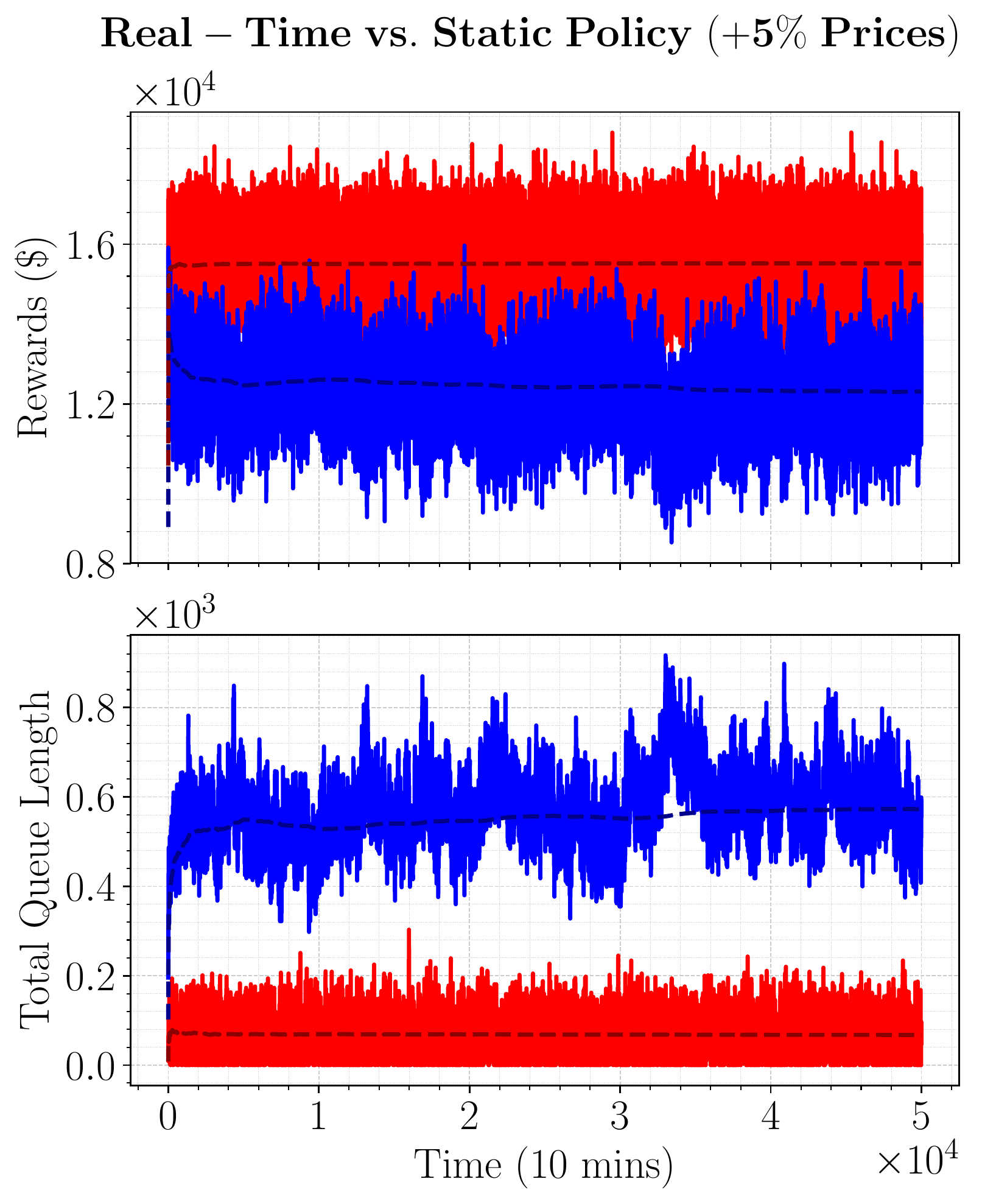}
\caption{}
\label{fig:staticvsdynamichigher}
\end{subfigure}
\bigskip  
\begin{subfigure}[b]{0.4\textwidth}
\centering
\includegraphics[width=\linewidth]{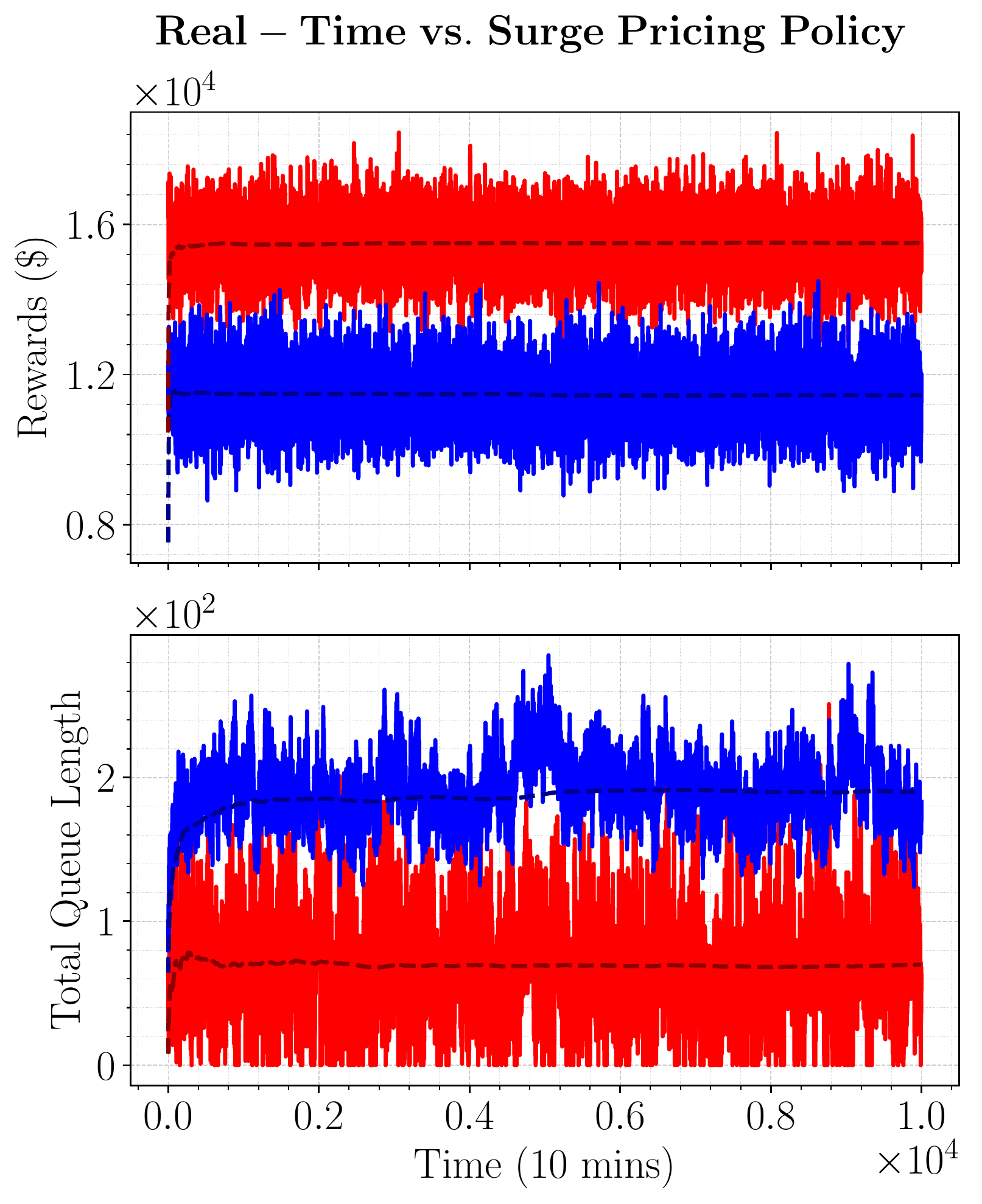}
\caption{}
\label{fig:staticvsdynamicsurge}
\end{subfigure}
\hfill 
\begin{subfigure}[b]{0.4\textwidth}
\centering
\includegraphics[width=\linewidth]{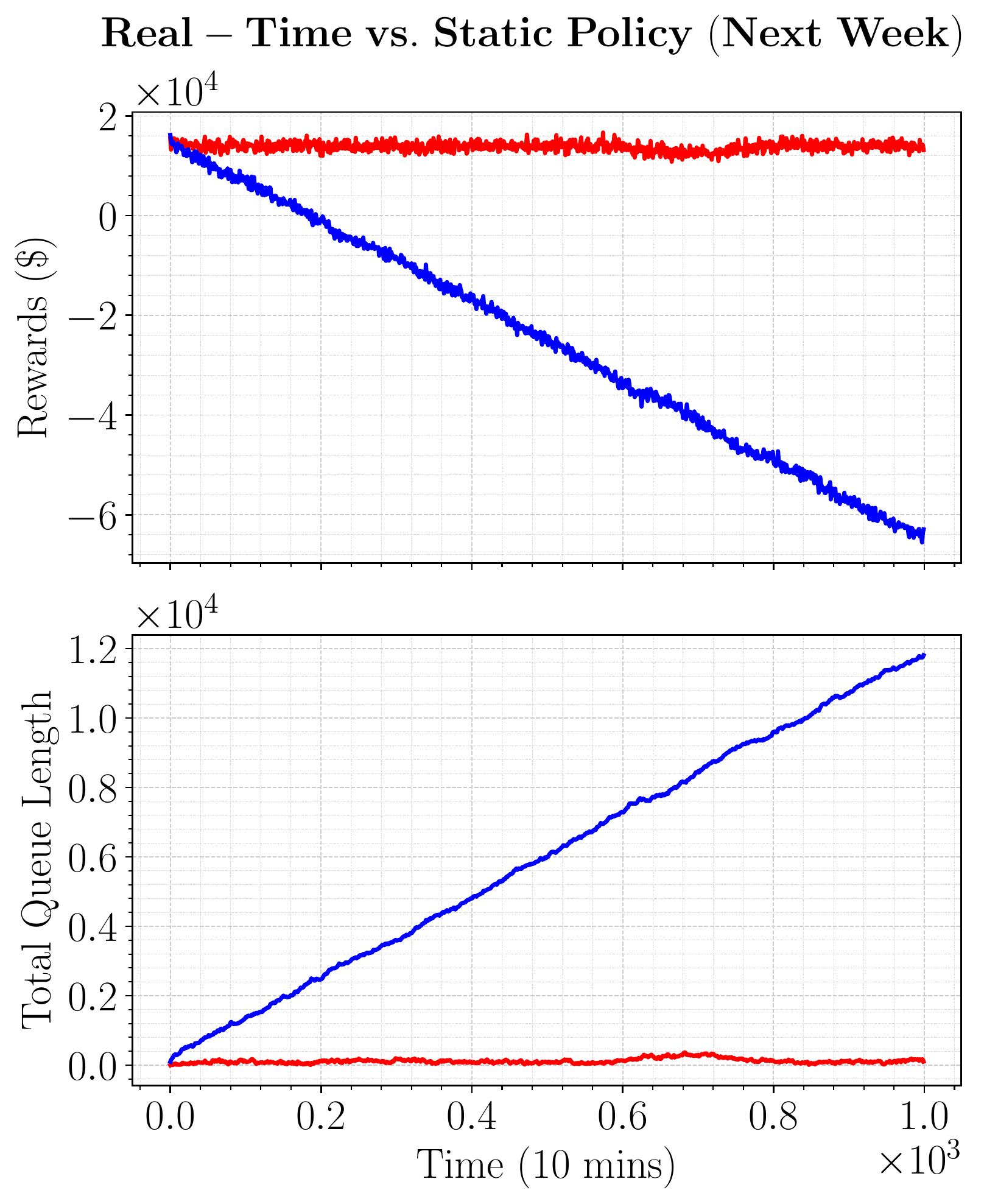}
\caption{}
\label{fig:staticvsdynamicnextweek}
\end{subfigure}
\caption{Comparison of different policies for Manhattan case study. The legends for all figures are the same as the top left figure, where red lines correspond to the real-time policy and blue lines correspond to the static policies (We excluded the running averages for (d), because the static policy diverges). In all scenarios, we use the rewards generated and the total queue length as metrics. In (a), we demonstrate the results from applying the real-time policy and the optimal static policy. In (b), we compare the real-time policy with the static policy that utilizes $5\%$ higher prices than the optimal static policy. In (c), we utilize a surge pricing policy along with the optimal static policy and compare with the real-time policy. In (d), we employ the real-time policy and static policy developed for May 4, 2019, Saturday for the arrivals on May 11, 2019, Saturday.}
\label{fig:manhattanresults}
\end{figure*}

The optimal static policy fails to generate profits and is not necessarily the best static policy to apply in a dynamic setting. As such, in Figure \ref{fig:staticvsdynamichigher} we demonstrate the performance of a sub-optimal static policy, \bt{where the prices are $5\%$ higher than the optimal static prices} to reduce the arrival rates and hence reduce the queue lengths.  Observe that the profits generated are higher than the profits generated using optimal static policy for the static planning problem while the total queue length is less. This result indicates that under the stochasticity of the dynamic setting, a sub-optimal static policy can perform better than the optimal static policy. \bt{ Furthermore, we summarize the performances of other static policies with higher static prices, namely with $5\%,10\%,20\%30\%$, and $40\%$ higher prices than the optimal static prices in Table~\ref{tab:manhattan_higherpricepolicies}. Among these, an increase of $10\%$ performs the best in terms of rewards. Nevertheless, this policy does still do worse in terms of rewards and total queue length compared to the real-time policy, which generates around $10\%$ more rewards and results in $70\%$ less queues. Lastly we note that although a $40\%$ increase in prices results in minimum average queue length, this is a result of significantly reduced induced demand and therefore it generates very low rewards.}
\begin{table}[h]
    \centering
    \bt{
    \begin{tabular}{|c||c|c|c|c|c|}
    \hline
         \backslashbox{Metrics}{\% of opt. static prices}&105\% &110\% &120\% &130\%&140\%  \\\hline
         Average Rewards&12234.13 &\textbf{14112.77} & 13739.35&12046.91 &9625.82\\\hline
         Average Queue Length& 584.05& 231.93&74.64 & 30.88&\textbf{14.20}\\\hline
    \end{tabular}}
    \caption{Performances of static pricing policies for Manhattan case study.}
    \label{tab:manhattan_higherpricepolicies}
\end{table}

Next, we showcase \bt{that} even some heuristic modifications \bt{which} resemble what is done in practice can do better than the optimal static policy. We utilize the optimal static policy, but additionally utilize a surge-pricing policy. The surge-pricing policy aims to decrease the arrival rates for longer queues so that the queues will stay shorter and the rewards will increase. \bt{At each time period, for all OD pairs, the policy is to increase the price by $50\%$ if the queue is longer than $100\%$ of the induced arrival rate. The results are displayed in Figure \ref{fig:staticvsdynamicsurge}. New arrivals bring higher revenue per person and the total queue length is decreased, which stabilizes the network while generating more profits than the optimal static policy. The surge pricing policy results in stable short queues and higher rewards compared to the optimal static policy for the static setting, however, both the real-time policy and the static pricing policy with $10\%$ higher prices are superior. Performances of other surge pricing policies that multiply the prices by $1.25/1.5/2$ if the queue is longer than $50\%/100\%/200\%$ of the induced arrival rates can be found in Table~\ref{tab:manhattan_surgepolicies}. Accordingly, the best surge pricing policy maximizing the rewards is to multiply the prices by $1.25$ if the queue is longer than $50\%$ of the induced arrival rate. Yet, our real-time policy still generates around $20\%$ more rewards and results in $32\%$ less queues. We note that a surge pricing policy that multiplies the prices by $2$ when the queues are longer than $50\%$ of the induced arrival rates minimizes the queues by decreasing the induced arrival rates significantly, which results in substantially low rewards.}

\begin{table}[h]
\centering
\bt{
 \begin{tabular}{|c||c|c|c|c|c|c|} 
 \hline
\multirow{2}{*}{\backslashbox{Surge Multiplier}{Queue Threshold}} &\multicolumn{2}{c|}{50\%}  
& \multicolumn{2}{c|}{100\%} 
& \multicolumn{2}{c|}{200\%}\\ \cline{2-7}
&Queue&Rewards &Queue&Rewards&Queue&Rewards\\ \hline
1.25&101.25 &\textbf{13022.83} &186.56 &12897.30 &380.34  &12357.33\\\hline
1.5& 91.89&12602.90 &178.22 &12589.71 &370.18 &12233.95\\\hline
 2&\textbf{83.15} &5272.04 &162.99 &6224.69 &337.01 &7485.75\\\hline
\end{tabular}}
\caption{Performances of surge pricing policies for Manhattan case study.}
\label{tab:manhattan_surgepolicies}
\end{table}


Finally, we test how the static and the real-time policies are robust to variations in input statistics. We compare the rewards generated and the total queue length applying the static and the real-time policies for the arrival rates of May 11, 2019, Saturday between 18.00-20.00. The results are displayed in Figure \ref{fig:staticvsdynamicnextweek}. Even though the arrival rates between May 11 and May 4 do not differ much, the static policy is not resilient and fails to stabilize when there is a slight change in the network. The real-time policy, on the other hand, is still able to stabilize the network and generate profits. The neural-network based policy is able to determine the correct pricing and routing decisions by considering the current state of the network, even under different arrival rates.

These experiments show us that we can indeed develop a real-time policy using deep reinforcement learning and this policy is resilient to small changes in the network parameters. The next study investigates the idea of generality, i.e., whether we can develop a global real-time policy and fine-tune it to a specific environment with \emph{few-shots} of training, rather than developing a new policy from scratch.

\begin{figure}[t]
\begin{minipage}[t]{.65\textwidth}
\centering
\includegraphics[width=\textwidth]{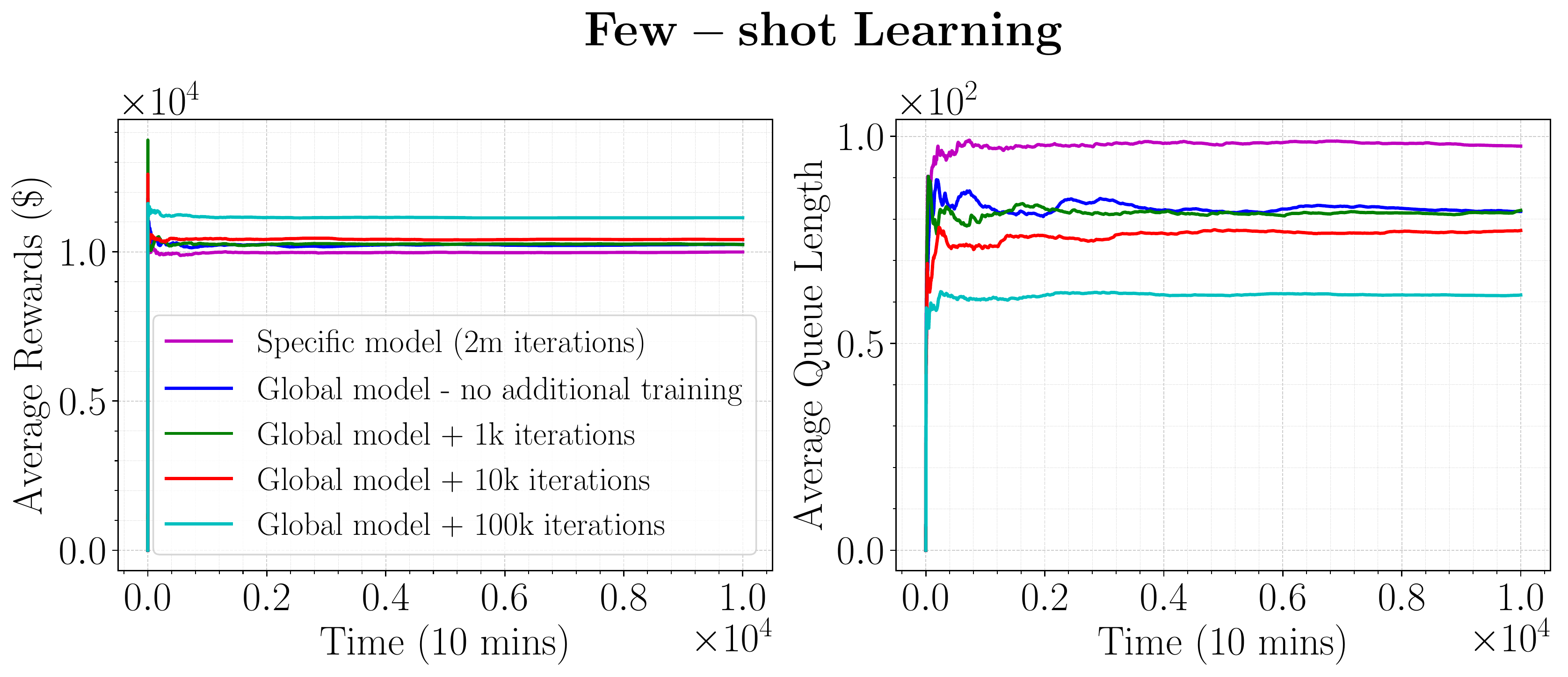}
\caption{Performances of the specific model that is trained from scratch and fine-tuned global model (for different amounts of fine-tuning as specified in the legend): rewards (left) and queue lengths (right).}\label{fig:metaresults}
\end{minipage}\hspace{.2cm}%
\begin{minipage}[t]{.3\textwidth}
\centering
\includegraphics[width=\textwidth]{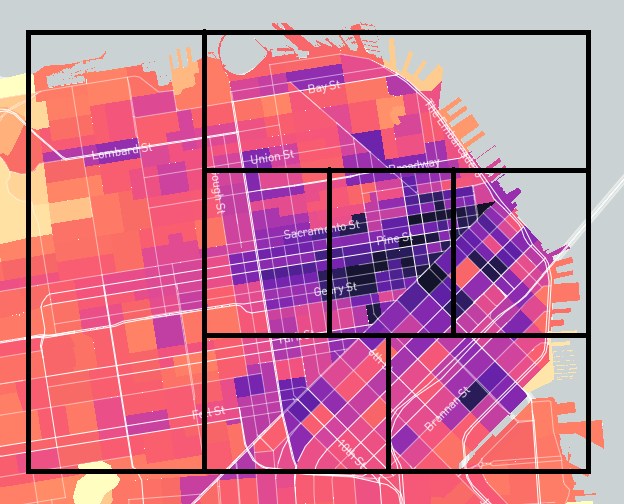}
  \caption{San Francisco divided into $m=7$ regions. Map obtained from the San Francisco County Transportation Authority \cite{sfmap}.}
  \label{fig:sfregions}
\end{minipage}
\vspace{-1.5em}
\end{figure}

\vspace{.5em}
\noindent
\textbf{Few-shot Learning:}
A common problem with reinforcement learning approaches is that because the agent is trained for a specific environment, it fails to respond to a slightly changed environment. Hence, one would need to train a different model for different environments (different network configurations, different arrival rates). However, this is not a feasible solution considering that training one model takes millions of iterations. As a more tractable solution, one could train a global model using different environments, and then calibrate it to the desired environment with fewer iterations rather than training a new model from scratch. We tested this phenomenon by training a global model for Manhattan using various arrival rates and network configurations that we extracted from different 2-hour intervals (We trained the global model for 10 million iterations). We then trained this model for the network configuration and arrival rates on May 6, 2019, Monday between 15.00-17.00. The results are displayed in Figure~\ref{fig:metaresults}. Even with no additional training, the
global model performs better than the specific model trained from scratch for 2 million iterations. Furthermore, with only few iterations, it is possible to improve the performance of the global model significantly. This is an anticipated result, because although the network configurations and arrival rates for different 2-hour intervals are different, the environments are not fundamentally different (the state transitions are governed by similar random processes) and hence it is possible to generalize a global policy and fine-tune it to the desired environment with fewer number of iterations.


\subsection{Case Study in San Francisco}
\begin{wrapfigure}{R}{0.45\textwidth}
\centering
    \includegraphics[width=.45\textwidth]{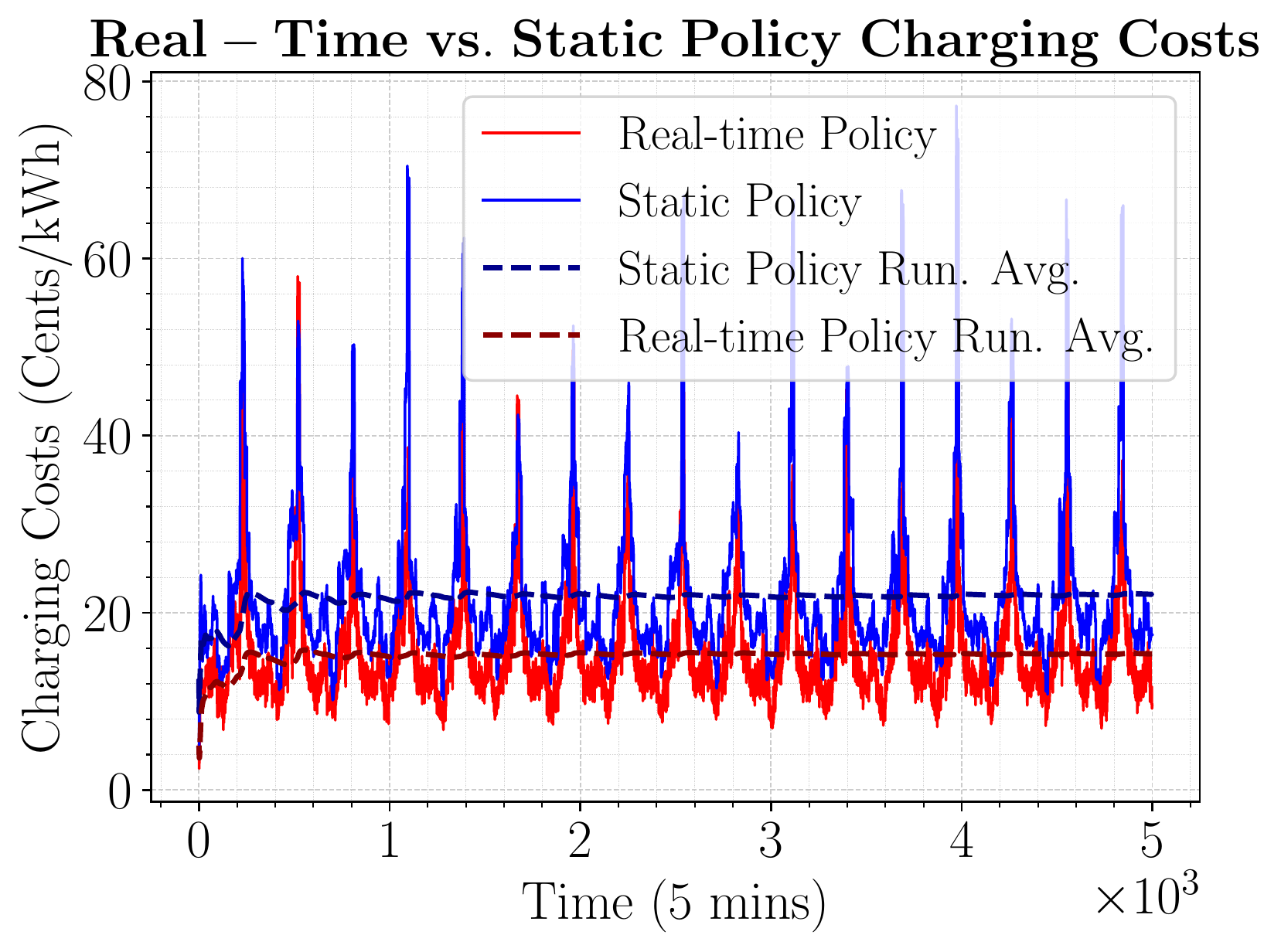}
  \caption{Charging costs for the optimal static policy and the real-time policy in San Francisco case study.}
  \label{fig:sf_chargecostl}
\end{wrapfigure}
We conducted the case study in San Francisco by utilizing an EV fleet of 420 vehicles. We divided San Francisco into 7 regions as in Figure \ref{fig:sfregions}, and using the traceset of mobility of taxi cabs data from CRAWDAD \cite{epfl-mobility-20090224}, we obtained the average arrival rates and travel times between regions (we exclude the rides occurring in the same region).

In Figure \ref{fig:sf_chargecostl}, we compare the charging costs paid under the real-time policy and the static policy. The static policy is generated by using the average value of the electricity prices, whereas the real-time policy takes into account the current electricity prices before executing an action. Therefore, the real-time policy provides cheaper charging options by utilizing smart charging strategies, decreasing the average charging costs by $25\%$.

In Figure \ref{fig:staticvsdynamic_sf}, we compare the rewards and the total queue length resulting from the real-time policy and the static policy. \bt{In Figure \ref{fig:staticvsdynamichigher_sf}, we compare the RL policy to the static policy with $5\%$ higher prices than the optimal static policy, and summarize performances of several other static pricing policies in Table~\ref{tab:bayarea_higherpricepolicies}.
In Figure \ref{fig:staticvsdynamicsurge_sf}, we use the static policy but also utilize a surge pricing policy that multiplies the prices by $1.5$ if the queues are longer than $100\%$ of the induced arrival rates. The performances of other surge pricing policies are also displayed in Table~\ref{tab:bayarea_surgepolicies}}. Similar to the case study in Manhattan, the results demonstrate that the performance of the trained real-time policy is superior to the other policies. \bt{In particular, the RL policy is able to generate around $24\%$ more rewards and result in around $75\%$ less queues than the best heuristic policy, which utilizes $30\%$ higher static prices than the optimal static policy}.

\begin{figure}[t]
     \centering
     \hspace{-0.5cm}
     \begin{subfigure}[y]{0.33\textwidth}
         \centering
         \includegraphics[width=\textwidth]{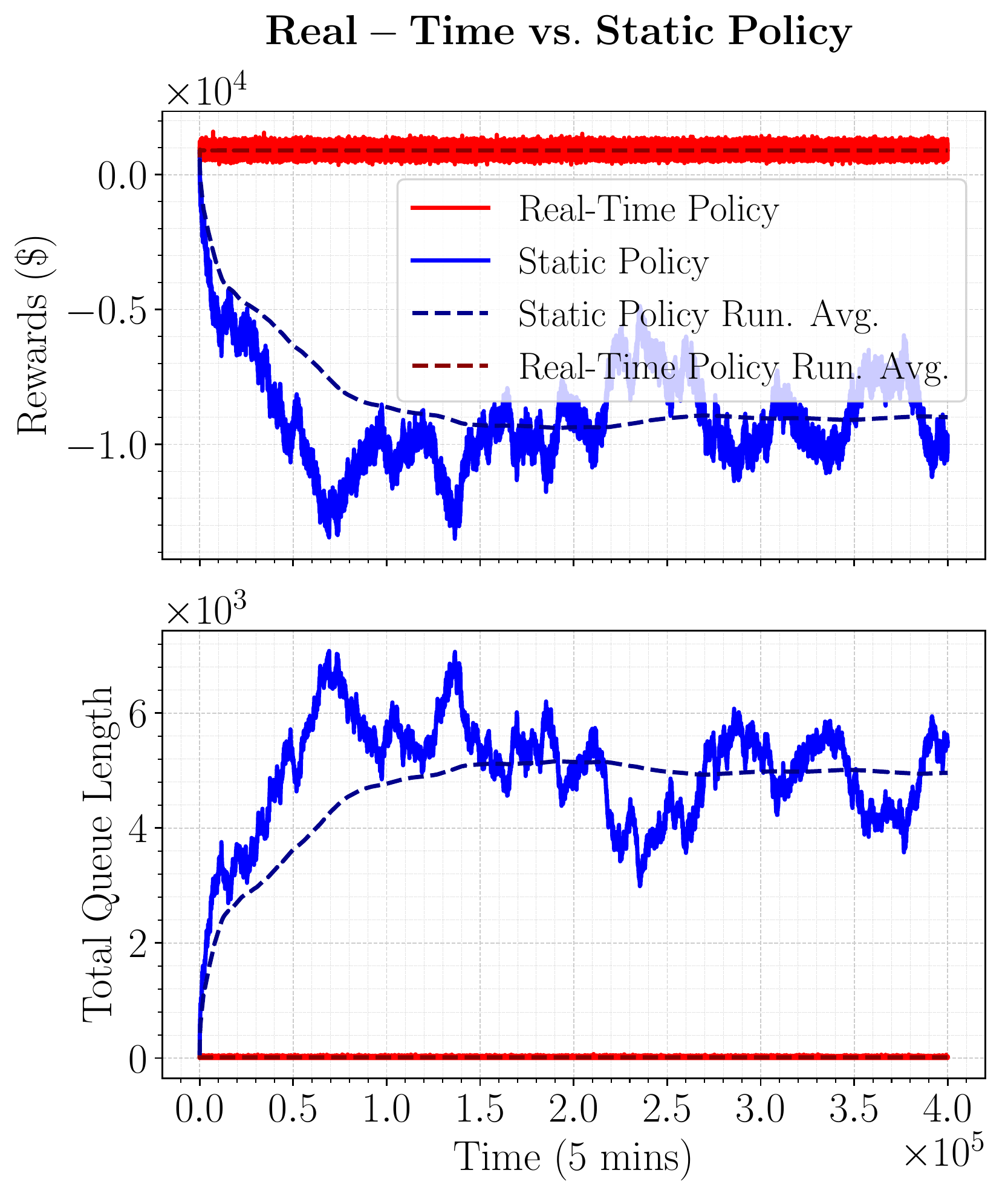}
         \caption{}
         \label{fig:staticvsdynamic_sf}
     \end{subfigure}
     \hfill
     \begin{subfigure}[y]{0.33\textwidth}
         \centering
         \includegraphics[width=\textwidth]{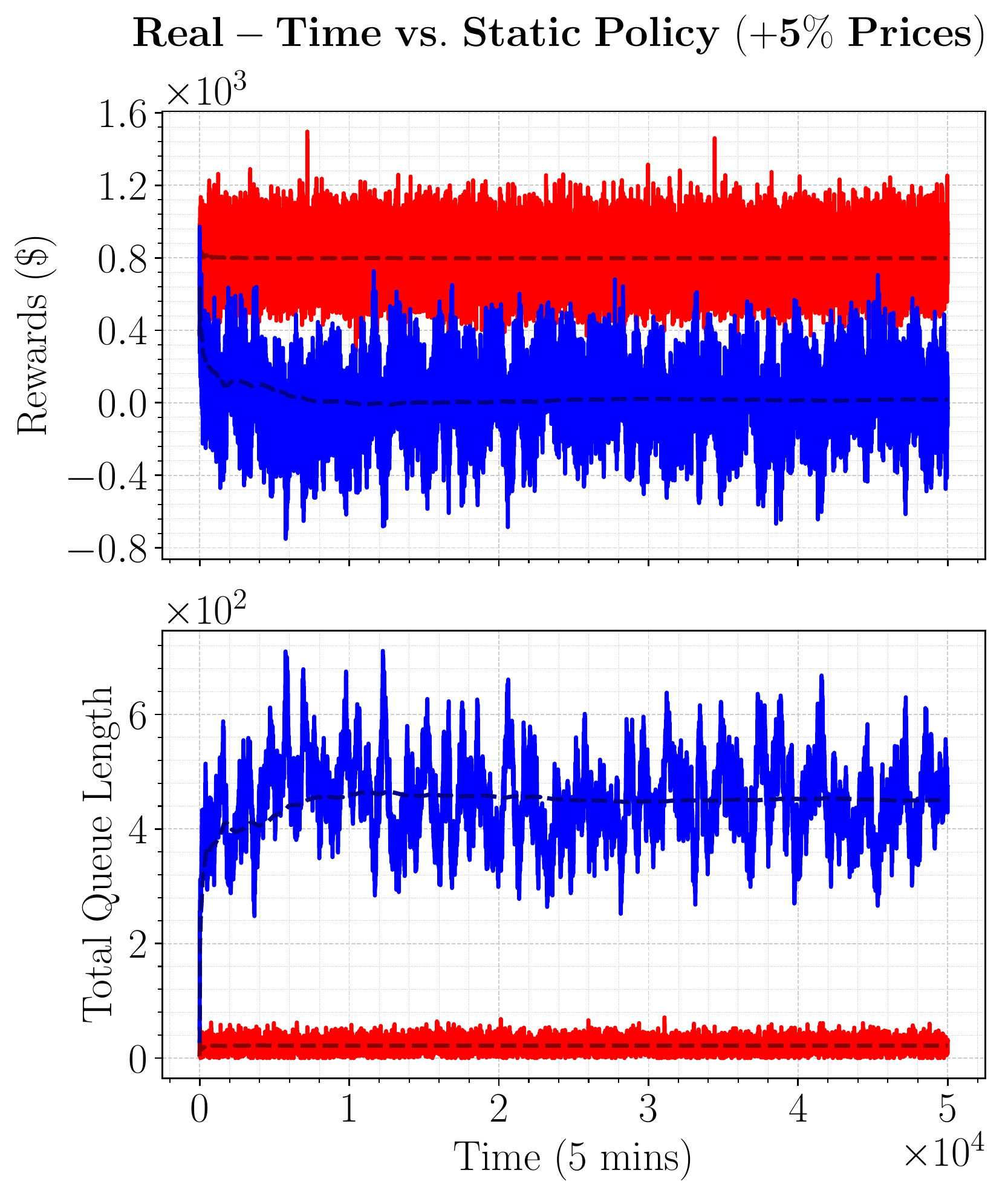}
         \caption{}
         \label{fig:staticvsdynamichigher_sf}
     \end{subfigure}
     \hfill
     \begin{subfigure}[y]{0.33\textwidth}
         \centering
         \includegraphics[width=\textwidth]{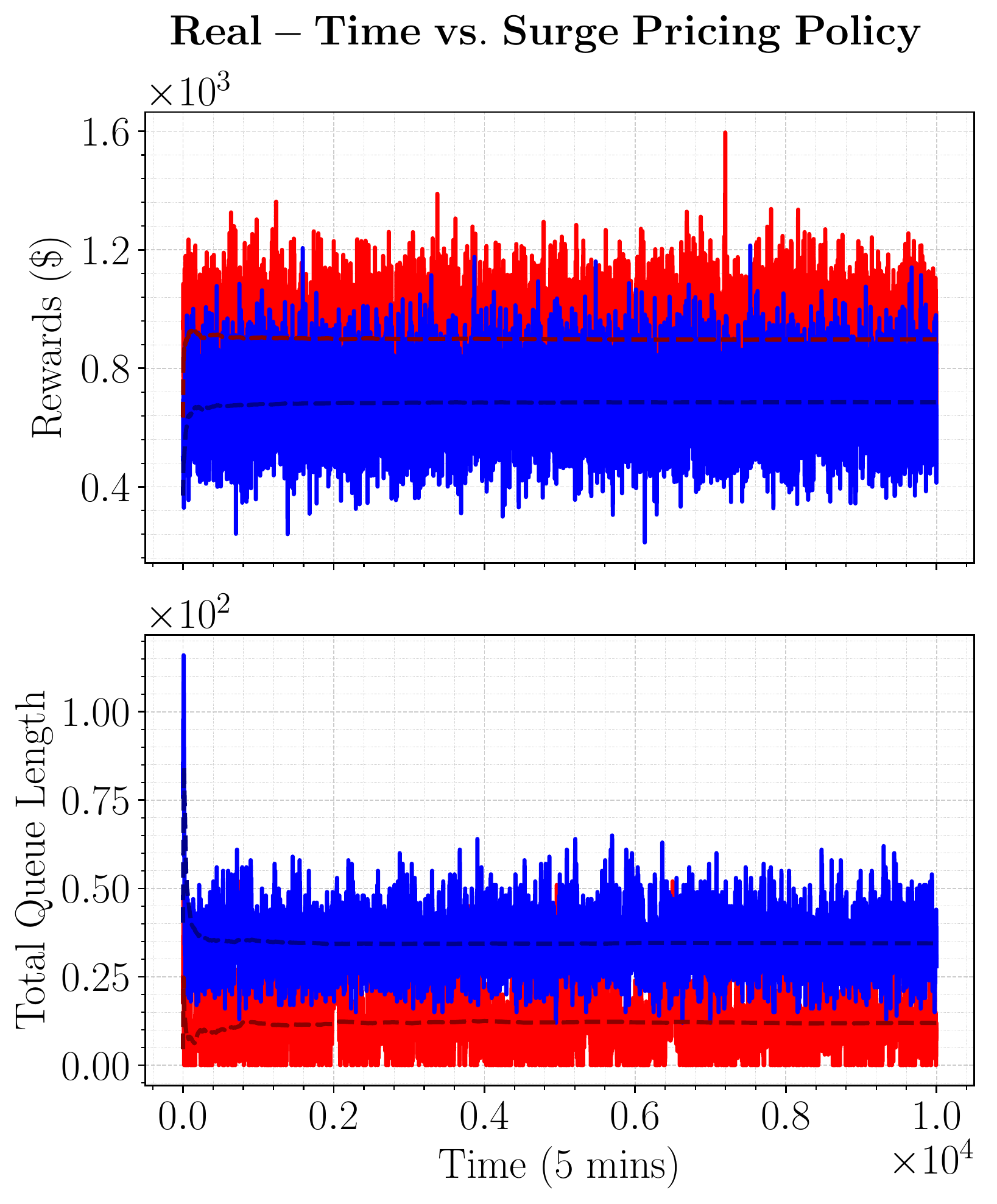}
         \caption{}
         \label{fig:staticvsdynamicsurge_sf}
     \end{subfigure}
        \caption{Comparison of different policies for San Francisco case study. The legends for all figures are the same as the top left figure, where red lines correspond to the real-time policy and blue lines correspond to the static policies. In all scenarios, we use the rewards generated and the total queue length as metrics. In (a), we demonstrate the results from applying the real-time policy and the optimal static policy. In (b), we compare the real-time policy with a sub-optimal static policy, where the prices are $5\%$ higher than the optimal static policy. In (c), we utilize a surge pricing policy along with the optimal static policy and compare with the real-time policy.}
        \label{fig:sfresults}
\end{figure}


\begin{table}[h]
\bt{
 \begin{tabular}{|c||c|c|c|c|c|c|} 
 \hline
\multirow{2}{*}{\backslashbox{Surge Multiplier}{Queue Threshold}} &\multicolumn{2}{c|}{50\%}  
& \multicolumn{2}{c|}{100\%} 
& \multicolumn{2}{c|}{200\%}\\ \cline{2-7}
&Queue&Rewards &Queue&Rewards&Queue&Rewards\\ \hline
1.25&67.62 &\textbf{718.66} &75.92 &715.02 &99.56  &687.45\\\hline
1.5&25.16&650.90 &34.32 &687.71 &49.94 &708.38\\\hline
 2&\textbf{14.06} &331.21 &20.55 &455.25 &44.44 &611.23\\\hline
 \end{tabular}}
\caption{Performances of surge pricing policies for San Francisco case study.}
\label{tab:bayarea_surgepolicies}
\end{table}

\begin{table}[h]
    \bt{
    \begin{tabular}{|c||c|c|c|c|c|}
    \hline
         \backslashbox{Metrics}{\% of opt. static prices}&105\% &110\% &120\% &130\%&140\%  \\\hline
         Average Rewards&4.98 &485.65 & 696.38&\textbf{721.89} &682.76\\\hline
         Average Queue Length& 456.83& 211.04&87.15 & 45.28&\textbf{25.66}\\\hline
    \end{tabular}}
    \caption{Performances of static pricing policies for San Francisco case study.}
    \label{tab:bayarea_higherpricepolicies}
\end{table}

\newpage
\section{Conclusion}\label{sec:conclusion}

In this paper, we developed a real-time control policy based on deep reinforcement learning for operating an AMoD fleet of EVs as well as pricing for rides. Our real-time control policy jointly makes decisions for: 1) vehicle routing in order to serve passenger demand and to rebalance the empty vehicles, 2) vehicle charging in order to sustain energy for rides while exploiting geographical and temporal diversity in electricity prices for cheaper charging options, and 3) pricing for rides in order to adjust the potential demand so that the network is stable and the profits are maximized. 
Furthermore, we formulated the static planning problem associated with the dynamic problem in order to define the optimal static policy for the static planning problem. When implemented correctly, the static policy
provides stability of the queues in the dynamic setting, yet it is not optimal regarding the profits and keeping the queues sufficiently low. Finally, we conducted case studies in Manhattan and San Francisco that demonstrate the performance of our developed policy. \btt{The two case studies on different networks indicate that reinforcement learning can be a universal method for establishing well performing real-time policies that can be applied to many real world networks. Lastly, by doing the Manhattan study with non-electric vehicles and San Francisco study with electric vehicles, we have also demonstrated that a real-time policy using reinforcement learning can be established for both electric and non-electric AMoD systems.}

\bibliographystyle{IEEEtran}
\bibliography{references}
\newpage
\begin{appendices}

\section{Proof of Proposition \ref{prop:stability}}\label{sec:proofpropstability}
To prove Proposition \ref{prop:stability}, we first formulate the static optimization problem via a network flow model that characterizes the \textit{capacity region} of the network for a given set of prices $\ell_{ij}(t)=\ell_{ij}\; \forall t$ (Hence, $\Lambda_{ij}(t)=\Lambda_{ij} \; \forall t$). 
The capacity region is defined as the set of all arrival rates $[\Lambda_{ij}]_{i,j \in \mathcal M}$, where there exists a charging and routing policy under which the queueing network of the system is stable.
Let $x_i^v$ be the number of vehicles available at node $i$, $\alpha_{ij}^v$ be the fraction of vehicles at node $i$ with energy level $v$ being routed to node $j$, and $\alpha_{ic}^v$ be the fraction of vehicles charging at node $i$ starting with energy level $v$. We say the static vehicle allocation   for node $i$ and energy level $v$ is feasible if $\alpha_{ic}^v + \sum_{\substack{j\in{\cal M}\\ j\neq i}} \alpha_{ij}^v \leq 1$.

The optimization problem that characterizes the capacity region of the network ensures that the total number of vehicles routed from $i$ to $j$ is at least as large as the nominal arrival rate to the queue $(i,j)$. Namely, the vehicle allocation problem can be formulated as follows:
\allowdisplaybreaks[3]
\begin{subequations}
\label{eq:capacityregionproblem}
\begin{align}
\label{eq:capobj}&\underset{x_i^v,\alpha_{ij}^v,\alpha_{ic}^v}{\text{min}}& & \rho\\
& \text{subject to}
\label{eq:capconst1}& & \Lambda_{ij} \leq \sum_{v=v_{ij}}^{v_{\max}}x_i^v\alpha_{ij}^v \quad\forall i,j\in \mathcal M,\\
\label{eq:capconst2}& & & \rho \geq \alpha_{ic}^v + \sum_{\substack{j\in{\cal M}\\ j\neq i}} \alpha_{ij}^v\quad \forall i \in \mathcal M,\; \forall v \in \mathcal V,\\
\label{eq:capconst3} & & & x_i^v = x_i^{v-1}\alpha_{ic}^{v-1}+\sum_{j\in{\cal M}} x_i^{v+v_{ji}}\alpha_{ji}^{v+v_{ji}}\quad \forall i \in \mathcal M,\; \forall v \in \mathcal V,\\
\label{eq:capconst5}& & & \alpha_{ic}^{v_{\max}}=0\quad \forall i\in \mathcal M,\\
\label{eq:capconst6}& & & \alpha_{ij}^v=0\quad \forall v<v_{ij},\; \forall  i,j\in \mathcal M\\
& & & x_i^v\geq 0, \; \alpha_{ij}^v\geq 0 \; \alpha_{ic}^v\geq 0, ~\forall i,j\in \mathcal M,\;\forall v\in\mathcal V,\\
& & & x_i^v=\alpha_{ic}^v=\alpha_{ij}^v=0\quad \forall v\notin\mathcal V,\; \forall  i,j\in \mathcal M.
\end{align}
\end{subequations}

The constraint \eqref{eq:capconst2} upper bounds the allocation of vehicles for each node $i$ and energy level $v$. The constraints \eqref{eq:capconst3}-\eqref{eq:capconst6} are similar to those of optimization problem \eqref{eq:flowoptimization} with $x_i^v=x_{ic}^v+\sum_{j\in{\cal M}}x_{ij}^v$, $\alpha_{ic}^v=x_{ic}^v/x_i^v$, and $\alpha_{ij}^v=x_{ij}^v/x_i^v$.
\begin{lemma}\label{lem:capacityregion}
Let the optimal value of \eqref{eq:capacityregionproblem} be $\rho^*$. Then, $\rho^*\leq 1$ is a necessary and sufficient condition of rate stability of the system under some routing and charging policy.
\end{lemma}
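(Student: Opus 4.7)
\vspace{1em}
\noindent\textbf{Proof Proposal.} The plan is to prove the two directions separately. For sufficiency ($\rho^* \leq 1 \Rightarrow$ rate stability) I will take the optimal LP solution $\{x_i^{v*}, \alpha_{ij}^{v*}, \alpha_{ic}^{v*}\}$ and explicitly construct a randomized stationary policy that realizes it. For necessity (rate stability $\Rightarrow \rho^*\leq 1$) I will start from the existence of a stabilizing policy and extract a time-averaged flow solution that is feasible with objective value at most $1$, which then upper bounds $\rho^*$.

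For sufficiency, let $\rho^*\leq 1$ and consider the policy in which, each time a vehicle becomes available at node $i$ with energy level $v$, it charges with probability $\alpha_{ic}^{v*}$, travels to node $j$ with probability $\alpha_{ij}^{v*}$, and remains idle with the residual probability $1-\alpha_{ic}^{v*}-\sum_{j\neq i}\alpha_{ij}^{v*}\geq 0$ (well defined since $\rho^*\leq 1$). First I would verify via \eqref{eq:capconst3} and the strong law of large numbers that the time-average number of vehicles in state $(i,v)$ converges to $x_i^{v*}$, so that the long-run average service rate from $i$ to $j$ equals $\sum_{v=v_{ij}}^{v_{\max}} x_i^{v*}\alpha_{ij}^{v*}\geq \Lambda_{ij}$ by \eqref{eq:capconst1}. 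I would then apply a standard Foster--Lyapunov / fluid-limit argument with the quadratic Lyapunov function $L(\boldsymbol q)=\tfrac{1}{2}\sum_{i,j}q_{ij}^2$: since arrivals to queue $(i,j)$ have rate $\Lambda_{ij}$ and departures have rate at least $\Lambda_{ij}$, the one-step conditional drift is non-positive outside a bounded set, yielding $\lim_{t\to\infty} q_{ij}(t)/t = 0$ almost surely, i.e.\ rate stability.

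For necessity, suppose a routing/charging policy $\mu$ stabilizes all queues. Define empirical averages over $[0,T]$: let $\bar x_i^v(T)$ be the time-average number of vehicles in state $(i,v)$, and let $\bar\alpha_{ic}^v(T)$ and $\bar\alpha_{ij}^v(T)$ be the empirical fractions of such vehicles that charge or are routed to $j$, respectively. Flow conservation for vehicles (counting arrivals to and departures from each $(i,v)$ state) combined with rate stability of queue $(i,j)$ (which forces the empirical service rate to be at least $\Lambda_{ij}$) will, in the limit $T\to\infty$ along a convergent subsequence, yield a solution to \eqref{eq:capconst1}--\eqref{eq:capconst6} with $\rho\leq 1$. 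Hence $\rho^*\leq 1$.

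The main obstacle will be handling two technicalities simultaneously: the bilinear term $x_i^v\alpha_{ij}^v$ in the LP and the fact that realizable policies must produce integer vehicle movements. The bilinearity is resolved by the substitution $x_{ij}^v = x_i^v\alpha_{ij}^v$, $x_{ic}^v=x_i^v\alpha_{ic}^v$, which linearizes \eqref{eq:capconst1}--\eqref{eq:capconst3} and also shows that the empirical flows from the stabilizing policy in the necessity direction directly satisfy the linearized constraints without needing to recover a well-defined $x_i^v$. The integrality issue is handled by observing that the randomized Bernoulli dispatch in the sufficiency argument produces integer per-period decisions whose expected rates match the LP fractions, so the Foster--Lyapunov drift bound holds in expectation, which is sufficient to conclude rate stability. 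I would also briefly verify that the residual idling probability in the sufficiency construction is nonnegative precisely because $\rho^*\leq 1$ forces $\alpha_{ic}^{v*}+\sum_{j\neq i}\alpha_{ij}^{v*}\leq \rho^*\leq 1$ by \eqref{eq:capconst2}, tying the LP objective directly to the well-posedness of the constructed policy.
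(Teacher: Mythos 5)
Your proposal reaches the right conclusion and its core mechanism (flow balance plus ``service rate $\geq$ arrival rate'' $\Leftrightarrow$ a feasible allocation with load at most one) is the same as the paper's, but the route is genuinely different. The paper argues both directions entirely inside the fluid model: for necessity it assumes a stabilizing policy, differentiates the fluid queue $Q_{ij}^t$ at a regular point, and reads off from $\dot{A}_{ij}^{t_1}=\dot{X}_{ij}^{t_1}=\Lambda_{ij}$ a feasible allocation with $\alpha_{ic}^v+\sum_{j\neq i}\alpha_{ij}^v\leq 1$, contradicting $\rho^*>1$; for sufficiency it plugs the static allocation into the fluid dynamics and shows by a continuity argument that no fluid queue can ever become positive. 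You instead work at the stochastic level: an explicit randomized stationary dispatch for sufficiency with an LLN plus a Lyapunov/drift bound, and empirical time averages with a convergent subsequence for necessity. Your necessity argument is fine and, with the substitution $x_{ij}^v=x_i^v\alpha_{ij}^v$, is arguably cleaner than extracting derivatives of the fluid limit. Two caveats on the sufficiency side. First, ``verify via \eqref{eq:capconst3} and the strong law of large numbers that the time-average number of vehicles in state $(i,v)$ converges to $x_i^{v*}$'' needs more than the SLLN: you need the ergodic theorem for the per-vehicle Markov chain induced by the randomized policy, and \eqref{eq:capconst3} only identifies $x_i^{v*}$ (after normalizing by fleet size, including in-transit states) as \emph{a} stationary solution; if the support of the LP optimum splits into several closed communicating classes, the time averages depend on how the fleet is initially distributed among them, so you must either place the vehicles accordingly or argue irreducibility. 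The paper sidesteps this by asserting the service rates directly in the fluid model, so your construction is more explicit but also carries this extra burden. Second, non-positive drift outside a bounded set does not give positive recurrence; it does give rate stability (which is all the lemma claims) provided you also invoke bounded second moments of the increments (Poisson arrivals, bounded fleet), so state that explicitly rather than leaning on the Foster--Lyapunov theorem as if it yielded stability in the stronger sense.
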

\begin{proof}
Consider the fluid scaling of the queueing network, $Q_{ij}^{rt}=\frac{q_{ij}(\lfloor rt\rfloor)}{r}$ (see \cite{fluidstability} for more discussion on the stability of fluid models), and let $Q_{ij}^t$ be the corresponding fluid limit. The fluid model dynamics is as follows:
\begin{equation*}
    \label{eq:fluiddynamics}
    Q_{ij}^t=Q_{ij}^0+A_{ij}^t-X_{ij}^t,
\end{equation*}
where $A_{ij}^t$ is the total number of riders from node $i$ to node $j$ that have arrived to the network until time $t$ and $X_{ij}^t$ is the total number of vehicles routed from node $i$ to $j$ up to time $t$. Suppose that $\rho^* > 1$ and there exists a policy under which for all $t\geq 0$ and for all origin-destination pairs $(i,j)$, $Q_{ij}^t=0$. Pick a point $t_1$, where $Q_{ij}^{t_1}$ is differentiable for all $(i,j)$. Then, for all $(i,j)$, $\dot{Q}_{ij}^{t_1}=0$. Since $\dot{A}_{ij}^{t_1}=\Lambda_{ij}$, this implies $\dot{X}_{ij}^{t_1}=\Lambda_{ij}$. On the other hand, $\dot{X}_{ij}^{t_1}$ is the total number of vehicles routed from $i$ to $j$ at $t_1$. This implies $\Lambda_{ij}=\sum_{v=v_{ij}}^{v_{\max}}x_i^v\alpha_{ij}^v$ for all $(i,j)$ and
there exists $\alpha_{ij}^v$ and $\alpha_{ic}^v$ at time $t_1$ such that the flow balance constraints hold and the allocation vector $[\alpha_{ij}^v\; \alpha_{ic}^v]$ is feasible, i.e. $\alpha_{ic}^v+\sum_{\substack{j=1\\j\neq i}}^m\alpha_{ij}^v\leq 1$. This contradicts $\rho^*>1$.

Now suppose $\rho^*\leq 1$ and $\alpha^*=[\alpha_{ij}^{v*}\; \alpha_{ic}^{v*}]$ is an allocation vector that solves the static problem. The cumulative number of vehicles routed from node $i$ to $j$ up to time $t$ is $S_{ij}^t=\sum_{v=v_{ij}}^{v_{\max}}x_i^v\alpha_{ij}^v t=\sum_{v=0}^{v_{\max}}x_i^v\alpha_{ij}^v t\geq \Lambda_{ij}t$. Suppose that for some  origin-destination pair $(i,j)$, the queue $Q_{ij}^{t_1}\geq \epsilon >0$ for some positive $t_1$ and $\epsilon$. By continuity of the fluid limit, there exists $t_0 \in (0,t_1)$ such that $Q_{ij}^{t_0}=\epsilon/2$ and $Q_{ij}^t>0$ for $t \in [t_0,t_1]$. Then, $\dot{Q}_{ij}^t > 0$ implies $\Lambda_{ij}>\sum_{v=v_{ij}}^{v_{\max}}x_i^v\alpha_{ij}^v$, which is a contradiction.
\end{proof}
 By Lemma \ref{lem:capacityregion}, the \textit{capacity region} $C_\Lambda$ of the network is the set of all $\Lambda_{ij} \in \mathbb{R}^+$ for which the corresponding optimal solution to the optimization problem \eqref{eq:capacityregionproblem} satisfies $\rho^*\leq 1$. As long as $\rho^*\leq 1$, there exists a routing and charging policy such that the queues will be bounded away from infinity.

The platform operator's goal is to maximize its profits by setting prices and making routing and charging decisions such that the system remains stable. In its most general form, the problem can be formulated as follows:
\begin{equation}
\label{eq:capacitytoprofit}
\begin{aligned}
&\underset{\ell_{ij},x_i^v,\alpha_{ij}^v,\alpha_{ic}^v}{\text{max}}& & U(\Lambda_{ij}(\ell_{ij}),x_i^v,\alpha_{ij}^v,\alpha_{ic}^v)\\
& \text{subject to}
& & [\Lambda_{ij}(\ell_{ij})]_{i,j \in \mathcal{M}}\in C_\Lambda,\\
\end{aligned}
\end{equation}
where $U(\cdot)$ is the utility function that depends on the prices, demand for rides and the vehicle decisions. 

Recall that $x_{ic}^v=x_{i}^v\alpha_{ic}^v$ and $x_{ij}^v=x_{i}^v\alpha_{ij}^v$. Using these variables and noting that $\alpha_{ic}^v+\sum_{j\in{\cal M}} \alpha_{ij}^v=1$ when $\rho^*\leq 1$, the platform operator's profit maximization problem can be stated as \eqref{eq:flowoptimization}. A feasible solution of \eqref{eq:flowoptimization} guarantees rate stability of the system, since the corresponding vehicle allocation problem \eqref{eq:capacityregionproblem} has solution $\rho^*\leq 1$.

\section{Proof of Proposition \ref{prop:marginalprices}}\label{sec:proofprop2}
For brevity of notation, let $\beta+p_i=P_i$. Let $\nu_{ij}$ be the dual variables corresponding to the demand satisfaction constraints and $\mu^v_i$ be the dual variables corresponding to the flow balance constraints.
Since the optimization problem   \eqref{eq:flowoptimization} is a convex quadratic maximization problem (given a with uniform $F(\cdot)$) and Slater's condition is satisfied, strong duality holds. We can write the dual problem as:
\begin{subequations}\label{eq:dualproblem}
\begin{align}
\label{eq:dualobjective}
    &\underset{\nu_{ij},\mu_i^v}{\text{min}}\underset{\ell_{ij}}{\text{max}}
    & &\sum_{i=1}^m\sum_{j=1}^m\left(\lambda_{ij}(1-\frac{\ell_{ij}}{\ell_{\max}})\left(\ell_i-\nu_{ij}\right)\right)\\
    & \text{subject to}
    & & \nu_{ij}\geq 0,\\
   & & & \nu_{ij}+\mu_i^v-\mu^{v-v_{ij}}-\beta \tau_{ij}\leq 0,\\
    & & & \mu_i^v-\mu_i^{v+1}-P_i \leq  0\quad \forall i,j,v.
\end{align}
\end{subequations}
For fixed $\nu_{ij}$
and $\mu_i^v$, the inner maximization results in the optimal prices:
\begin{equation}
\label{eq:optimalpricesproof}
    \ell_{ij}^*=\frac{\ell_{\max}+\nu_{ij}}{2}.
\end{equation}
By strong duality, the optimal primal solution satisfies the dual solution with optimal dual variables $\nu_{ij}^*$
and ${\mu_i^v}^*$, which completes the first part of the proposition. The dual problem with optimal prices in \eqref{eq:optimalpricesproof} can be written as:
\begin{subequations}
    \label{eq:dualwithoptimalprices}
    \begin{align}\label{eq:dualobjectivewithoptimalprices}
    &\underset{\nu_{ij}, \mu_i^v}{\text{min}}
    & &\sum_{i=1}^m\sum_{j=1}^m\frac{\lambda_{ij}}{\ell_{\max}}\left(\frac{\ell_{\max}-\nu_{ij}}{2}\right)^2\\
    & \text{subject to}
\label{eq:dualconstraint1}    & & \nu_{ij}\geq 0,\\
 \label{eq:dualconstraint3}  & & & \nu_{ij}+\mu_i^v-\mu_j^{v-v_{ij}}-\beta \tau_{ij}\leq 0,\\
\label{eq:dualconstraint4}    & & & \mu_i^v-\mu_i^{v+1}-P_i \leq  0\quad \forall i,j,v.
    \end{align}
\end{subequations}
The objective function in \eqref{eq:dualobjectivewithoptimalprices} with optimal dual variables, along with \eqref{eq:optimalpricesproof} suggests:
\begin{equation*}
    P=\sum_{i=1}^m\sum_{j=1}^m\frac{\lambda_{ij}}{\ell_{\max}}(\ell_{\max}-\ell_{ij}^*)^2,
\end{equation*}
where profits $P$ is the value of the objective function of both optimal and dual problems. To get the upper bound on prices, we go through the following algebraic calculations using the constraints. The inequality \eqref{eq:dualconstraint4} gives:
\begin{equation}
    \label{eq:dualconstraint5}
        \mu_i^{v-v_{ji}}\leq v_{ji}P_i+\mu_i^v,
\end{equation}
and equivalently:
\begin{equation}
    \label{eq:dualconstraint6}
    \mu_j^{v-v_{ij}}\leq v_{ij}P_j+\mu_j^v.
\end{equation}
The inequalities \eqref{eq:dualconstraint3} and \eqref{eq:dualconstraint1} yield:
\begin{equation*}
    \label{eq:dualconstraint7}
    \mu_i^v-\mu_j^{v-v_{ij}}-\beta \tau_{ij}\leq 0,
\end{equation*}
and equivalently:
\begin{equation}
    \label{eq:dualconstraint8}
    \mu_j^v-\mu_i^{v-v_{ji}}-\beta \tau_{ji}\leq 0,
\end{equation}
Inequalities \eqref{eq:dualconstraint5} and \eqref{eq:dualconstraint8}:
\begin{equation}
\label{eq:dualconstraint9}
\mu_j^v\leq \mu_i^v+\beta\tau_{ji}+v_{ji}P_i.
\end{equation}
And finally, the constraint \eqref{eq:dualconstraint3}:
\begin{align*}
    \nu_{ij}&\leq\beta \tau_{ij}+\mu_j^{v-v_{ij}}-\mu_i^v\\
    &\overset{\eqref{eq:dualconstraint6}}{\leq}\beta \tau_{ij}+v_{ij}P_j+\mu_j^v-\mu_i^v\\
    &\overset{\eqref{eq:dualconstraint9}}{\leq}\beta
    \tau_{ij}+v_{ij}P_j+\beta \tau_{ji}+v_{ji}P_i.
\end{align*}
Replacing $P_i=p_i+\beta$ and rearranging the terms:
\begin{align}
    \label{eq:lambdaupperbound}
    \nu_{ij}\leq \beta(\tau_{ij}+\tau_{ji}+v_{ij}+v_{ji})+v_{ij}p_j+v_{ji}p_i.
\end{align}
Using the upper bound on the dual variables $\nu_{ij}$ and \eqref{eq:optimalpricesproof}, we can upper bound the optimal prices.
\end{appendices}
\end{document}